\setlist{nosep}
\algrenewcommand\algorithmicrequire{\textbf{Input:}}
\algrenewcommand\algorithmicensure{\textbf{Output:}}
\newcommand{\V}{\bm{V}}
\newcommand{\R}{\mathbb{R}}
\newcommand{\N}{\mathbb{N}}
\newcommand{\Q}{\mathbb{Q}}
\newcommand{\mL}{\mathcal{L}}
\newcommand{\CL}{\mathscr{C}}
\newcommand{\Fb}{\mathbf{F}}
\newcommand{\cb}{\mathbf{c}}
\newcommand{\xb}{\mathbf{x}}
\newcommand{\yb}{\mathbf{y}}
\newcommand{\ab}{\mathbf{a}}
\newcommand{\wb}{\mathbf{w}}
\newcommand{\hb}{h}
\newcommand{\bb}{\mathbf{b}}
\newcommand{\fb}{\mathbf{f}}
\newcommand{\gb}{\bm{g}}
\newcommand{\C}{\mathbb{C}}
\newcommand{\Cn}{\C^n}
\newcommand{\Span}{{\rm{span}}}
\def\1{{\mathbbm 1}}
\newcommand{\gbt}{\widetilde{\gb}}
\newcommand{\compose}{\textsf{Compose}\xspace}
\newcommand{\InRadical}{\textsf{InRadical}\xspace}
\newcommand{\InvariantSet}{\textsf{InvariantSet}\xspace}
\newcommand{\ComputeLoopsUniversal}{\textsf{ComputeLoopsUniversal}\xspace}
\newcommand{\ComputeLoopsLinearUniversal}{\textsf{ComputeLoopsLinearUniversal}\xspace}
\newcommand{\InvariantSetBranch}{\textsf{InvariantSetBranch}\xspace}
\newcommand{\solveHom}{\textsf{solveHom}\xspace}
\newcommand{\sol}{\textsf{sol}\xspace}
\newcommand{\GenerateLoops}{\textsf{GenerateLoops}\xspace}
\newcommand{\coef}{\textsf{coef}\xspace}
\newcommand{\programbox}[2][\linewidth]{
\begin{samepage}\normalfont
\vspace*{.5em}\hspace*{0.3cm}\fbox{
\hspace*{-0.3cm}\begin{minipage}{#1}
\vspace*{-.1em}\begin{algorithmic}
#2
\end{algorithmic}
\vspace*{-.2em}
\end{minipage}
}\\
\end{samepage}
}
\newcommand{\programboxappendix}[2][\linewidth]{
\begin{samepage}\normalfont
\vspace*{.5em}\hspace*{0.0cm}\fbox{
\hspace*{-0.3cm}\begin{minipage}{#1}
\vspace*{-.1em}\begin{algorithmic}
#2
\end{algorithmic}
\vspace*{-.2em}
\end{minipage}
}\\
\end{samepage}
}
\begin{document}

\title{From Affine to Polynomial: Synthesizing Loops with Branches via Algebraic Geometry
}


\author{Erdenebayar Bayarmagnai         \and
        Fatemeh Mohammadi \and R\'emi Pr\'ebet 
}


\institute{E. Bayarmagnai \at
             KU Leuven, Department of Computer Science\\
            \email{
erdenebayar.bayarmagnai@kuleuven.be}           
           \\
           F.Mohammadi \at
               KU Leuven, Department of Computer Science\\
              \email{fatemeh.mohammadi@kuleuven.be}
              \\
              R.Pr\'ebet \at
              Inria, CNRS, ENS de Lyon, Université Claude Bernard Lyon 1, LIP, UMR 5668, 69342, Lyon cedex 07, France\\
              \email{remi.prebet@ens-lyon.fr} 
}

\date{}

\maketitle

\begin{abstract}
Ensuring software correctness remains a fundamental challenge in formal program verification. One promising approach relies on finding polynomial invariants for loops. Polynomial invariants are properties of a program loop that hold before and after each iteration. Generating such invariants is a crucial task in loop analysis, but it is undecidable in the general case. Recently, an alternative approach to this problem has emerged, focusing on synthesizing loops from invariants. However, existing methods only synthesize affine loops without guard conditions from polynomial invariants. In this paper, we address a more general problem, allowing loops to have polynomial update maps with a given structure, inequations in the guard condition, and polynomial invariants of arbitrary form.

We use algebraic geometry tools to design and implement an algorithm that computes a finite set of polynomial equations whose solutions correspond to all nondeterministic branching loops satisfying the given  invariants. Furthermore, we introduce a new class of invariants for which we present a significantly more efficient algorithm.
In other words, we reduce the problem of synthesizing loops to find solutions of multivariate polynomial systems with rational entries.
This final step is handled in our software using an SMT solver. 
\end{abstract}

\section{Introduction}
Loop invariants are properties that hold before and after each iteration of a loop. They play a central role in automating program verification, which guarantees program correctness prior to execution. Various well-established methods rely on loop invariants for safety verification, including the Floyd–Hoare inductive assertion technique \cite{floyd1993assigning} and termination verification using standard ranking functions \cite{manna2012temporal}. When a loop invariant takes the form of a polynomial equation or inequality, it is referred to as a polynomial invariant. In this paper, we restrict our attention to invariants given by polynomial equations.

In this work, instead of generating polynomial invariants for a given loop, we address the reverse problem, that is synthesizing a loop that satisfies a specified set of polynomial invariants.

We consider polynomial loops $\mL(\ab, h, F)$ of the form
\begin{center}
  \programbox[0.73\linewidth]{
\State$\xb:=(x_{1},\ldots, x_n)\gets \ab:=(a_1,\ldots, a_n)$
\While{$h(\xb)\neq 0$}
\State $\begin{pmatrix}
x_1 \\
x_2 \\
\vdots \\
x_n
\end{pmatrix}
\xleftarrow{F}
\begin{pmatrix}
F_1\\
F_2\\
\vdots\\
F_n
\end{pmatrix}
$
\EndWhile
}\label{page:alg}
\end{center}
Here, $x_i$ are the program variables with initial values $a_i$, $h \in \C[\xb]$, and $F = (F_1, \ldots, F_n)$ is a sequence of polynomials in $\C[\xb]$. The condition $h(\xb) \neq 0$, called the \emph{guard}, is assumed to be a single inequation for simplicity. This is without loss of generality, since a guard of the form $h_1 \neq 0, \ldots, h_k \neq 0$ can be replaced by the product condition $h_1 \cdot \ldots \cdot h_k \neq 0$. When no guard is present, we write $\mL(\ab, 1, F)$, which corresponds to an infinite loop.  

In the following, we introduce some terminology from algebraic geometry. For further details, we refer the reader to \cite{cox2013ideals,kempf_1993,shafarevich1994basic}. We denote the field of complex numbers by $\mathbb{C}$. Throughout the paper, $\xb$ denotes the tuple of indeterminates $x_1, \ldots, x_n$, and $\C[\xb]$ the multivariate polynomial ring in these variables.  

\smallskip\noindent{\bf Ideals.} A polynomial ideal $I$ is a subset of $\mathbb{C}[\xb]$ that is closed under addition, $0\in I$ and for any $f\in \C[\xb]$ and $g\in I$, $fg\in I$. Given a subset $S$ of $\C[\xb]$, the ideal generated by $S$ is  
$$\langle S \rangle =\{a_1f_1+\ldots +a_mf_m\mid a_i\in \C[\xb], f_i\in S, m\in \N\}.$$
By Hilbert Basis Theorem~\cite[Theorem 4, Chap.~2]{cox2013ideals}, every ideal is generated by finitely many polynomials. For a subset $X\subset \Cn$, the defining ideal $I(X)$ of $X$ is the set of polynomials that vanish on $X$. The radical ideal of an ideal $I\subset \C[\xb]$ is defined as 
$$\sqrt{I}=\{f\in \C[\xb]| \,f^m\in I \text{ for some } m\in \N\}.$$

\smallskip\noindent{\bf Varieties.} For $S\subset \C[x]$, the algebraic variety $\V(S)$ is the common zero set of all polynomials in $S$. Moreover, $\V(S)=\V(\langle S\rangle)$ and so every algebraic variety is the vanishing locus of finitely many polynomials. 

\smallskip\noindent{\bf Polynomial maps.} A map $F:\C^n \longrightarrow \C^m$ is a polynomial map if there exist $f_1,\ldots, f_m\in \C[x_1,\ldots, x_m]$ such that $$F(x)=(f_1(x),\ldots, f_m(x))$$ for all $x\in \C^n$. For simplicity, we will identify polynomial maps and their corresponding polynomials.

In the following definition, $F^{(l)}(x)$ is defined recursively as $F^{(l)}(x) = F(F^{(l-1)}(x))$ for any $l > 1$ and $F^{(0)}(x) = x$ by convention.
\begin{definition}\label{def:PI}
A~polynomial~$g$ is an invariant of the loop $\mathcal{L}(\ab, h, F)$~if,~for~any~$m \in \mathbb{Z}_{\geq 0}$, either
$$g(F^{(m)}(\ab)) = 0,$$  
or there exists $m\in \mathbb{Z}_{\geq 0}$ such that 
$$g(F^{(m)}(\ab)) =h(F^{(m)}(\ab))=0,$$ 
and for every $0\leq l<m$
$$
    g(F^{(l)}(\ab)) = 0 \text{\, and\, }  h(F^{(l)}(\ab)) \neq 0.
$$
\end{definition}
\begin{definition}\label{def:InvId}
    Let $\mL(\ab, h, F)$ be a polynomial loop. The set of all polynomial invariants for $\mL(\ab, h, F)$  is called the \emph{invariant ideal} of $\mL$ and is denoted by $I_{\mL(\ab, h, F)}$. 
\end{definition}

The invariant ideal is indeed known (e.g. \cite{rodriguez2004automatic}) to be an ideal of $\C[x_1,\ldots, x_n]$ where $x_1,\ldots, x_n$ are the program variables. Consider polynomials $f_1,\ldots, f_n$ in $\C[x_1,\ldots, x_n]$, then we will denote by $$\Span\{f_1,\ldots, f_n\}$$ the vector space they generate.

\begin{definition}\label{def:Ffb}
    Let $\fb_1 =(f_{1,1},\ldots, f_{1,l_1}),\ldots, \fb_n =(f_{n,1},$ $\ldots, f_{n,l_n})$ and $(F_1,\ldots, F_n)$ be sequences of polynomials in $\C[\xb]$ 
    such that for every $i$, $$F_i =\displaystyle\sum_{j=1}^{l_i} b_{i,j}f_{i,j}$$ for some $b_{i,j}$ $ \in \C$. Let $\fb=(\fb_1\ldots, \fb_n)$ and define a  map $$F_{\fb,\bb}:\Cn \longrightarrow \Cn$$ with $F_{\fb,\bb}(x)=(F_1(x),\ldots, F_n(x))$.
    
\end{definition}

The object defined below is the primary focus of this paper.
We prove that it is an algebraic variety and compute the
polynomial equations that define it.
\begin{definition}\label{def:coefset}
    Using the notation of Definition~\ref{def:Ffb}, let  
    $h \in \C[\xb]$, $\gb = (g_1,\ldots,g_m)$ be a sequence of polynomials in $\C[\xb]$, and $\ab \in \C^n$.  
    The \emph{coefficient set} of the polynomial loop structured by $\fb$ with invariants $\gb$ is defined as
    \[
        \CL(\ab,h,\fb;\gb) \;=\; 
        \bigl\{\;\bb \in \C^{l_1+\cdots+l_n} \;\bigm|\; \gb \subset I_{\mL(\ab,h,F_{\fb,\bb})}\;\bigr\}.
    \]
\end{definition}

In simple words, $\CL(\ab,h, \fb; \gb)$ is the set of all coefficient vectors $\bb \in \C^{l_1+\cdots +l_n}$ such that all polynomials in $\gb$ are polynomial invariants of the following loop:
\begin{center}
  \programbox[0.55\linewidth]{
\State$\xb\gets\ab$
\While{$h(\xb) \neq 0$}
\State $\xb \gets F_{\fb,\bb}(\xb)
$
\EndWhile
}
\end{center}
In prior works~\cite{ISSAC2023Laura,hitarth_et_al:LIPIcs.STACS.2024.41,humenberger2022LoopSynthesis}, restricting computations to affine loops corresponds to choose $\fb_i = (1, x_1,$ $ \dotsc, x_n)$, for all $1\leq i \leq n$.

\medskip
\noindent\textbf{Related work.}
The computation of polynomial invariants for loops has been an active area of research over the past two decades \cite{bayarmagnai2024algebraic,hrushovski2018polynomial,karr1976affine,kovacs2008reasoning,kovacs2023algebra,rodriguez2004automatic,rodriguez2007automatic,rodriguez2007generating,de2017synthesizing}. However, computing invariant ideals is undecidable for general loops \cite{hrushovski2023strongest}. Consequently, efficient methods have been developed for restricted classes of loops, particularly those in which the assertions are linear or can be reduced to linear form.

The converse problem, namely synthesizing loops from given invariants, has received considerably less attention in the literature. 
Most existing work has focused on linear and affine loops. 
Synthesizing linear loops that satisfy given invariants was shown to be NP-hard in~\cite{ait2025simple}. 
These studies differ primarily in the types of invariants considered: linear invariants in~\cite{saurabh2010}, a single quadratic polynomial in~\cite{hitarth_et_al:LIPIcs.STACS.2024.41}, and pure difference binomials in~\cite{ISSAC2023Laura}. 
In contrast,~\cite{humenberger2022LoopSynthesis} considers general polynomial invariants but lacks completeness guarantees and does not address guard conditions. 

Higher-degree loops have been investigated in a more general framework in~\cite{synthesis2023algebro}, which also considers polynomial inequalities as input. 
However, the loops synthesized by that algorithm are restricted to cases where the input invariants are inductive, that is if an invariant holds after one iteration, it continues to hold for all subsequent iterations.

\medskip
\noindent\textbf{Our contributions.} 
In this work, we consider the following situation: the generation of polynomial loops with guards from arbitrary polynomial invariants (equalities). We take the first step towards this goal by generating a polynomial system whose solutions correspond exactly to loops with a given structure that satisfy the specified polynomial invariants.
We then discuss different strategies that can be used to solve this system. In practice, in most cases a satisfying solution is found using an SMT solver.

More precisely, in Section~\ref{section:preliminaries} we recall the definition of invariant sets and relevant results from \cite{bayarmagnaiIssac}. Section~\ref{section:generateloops} introduces a method for simultaneously identifying multiple polynomial invariants (Proposition~\ref{prop:invarianttest}) and applies it to show that the set $\CL(\ab, h, \fb; \gb)$, the collection of all coefficients of polynomial maps of loops satisfying the polynomial invariants $\gb$ with respect to $\fb$, forms an algebraic variety. We also present Algorithm~\ref{algo:generateloops}, which computes the defining polynomials of $\CL(\ab, h, \fb; \gb)$.
Section~\ref{sec:polysolve} discusses several strategies for solving the resulting polynomial systems and highlights their limitations. Finally, in Section~\ref{sec:implementation} we describe our implementation of the algorithms and report the experimental results. 
\medskip

\noindent\textbf{Extended version.}
This paper is an extended version of the paper~\cite{bayarmagnai2025beyond}, published in the Proceedings of the 14th ACM SIGPLAN International Workshop on the State of the Art in Program Analysis. It expands on the earlier work by providing extended results and introducing new contributions. Specifically, Section~\ref{sec: branching} generalizes our previous work by addressing branching loops with nondeterministic conditional statements, whereas the earlier paper focused only on single-path loops. Section~\ref{sec: general invariants} introduces a new class of invariants, and we develop Algorithm~\ref{algo:universalgeneralcase} to compute polynomial equations that characterize all branching loops with a given structure that satisfy these invariants. In contrast to Algorithm~\ref{algo:generateloops}, Algorithm~\ref{algo:universalgeneralcase}  avoids costly Gr\"obner basis computation. In addition, when the given invariants are affine, Proposition~\ref{prop:affine space} shows that the set of coefficients of the update map for a loop satisfying these invariants forms an affine space. Algorithm~\ref{algo:generalinvariants} is then used to compute an affine basis for this space.

\section{Preliminaries}\label{section:preliminaries}
Here, we introduce the main objects of this paper and extend key results from \cite{bayarmagnaiIssac,bayarmagnai2024algebraic}.
\subsection{Invariant sets for a single polynomial map}
\begin{definition}\label{def:invariantset}
Let $F : \Cn \longrightarrow \Cn$ be a  polynomial map and $X\subset \Cn$.
The invariant set of $(F,X)$ is defined as:
\begin{center}
    $S_{(F,X)} = \{x \in X \mid \forall m\in \N, F^{(m)}(x) \in X\}.$
\end{center}
\end{definition}

The following proposition, adapted from \cite[Proposition 2.3]{bayarmagnaiIssac}, is the key result enabling the invariant sets to be computed algebraically. 
\begin{proposition}\label{prop:stabilization}
Let $X\subseteq\Cn$ be an algebraic variety and consider a polynomial map $F : \Cn \longrightarrow \Cn$. We define $$X_m=X\cap F^{-1}(X)\cap \ldots \cap F^{-m}(X)$$ for all $m \in \N$. Then, there exists $N\in \N$ such that 
$X_N=X_{N+1}$, and for any such index
$X_N = S_{(F,X)}$.
\end{proposition}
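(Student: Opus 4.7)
\medskip
\noindent\textbf{Proof plan.} The plan is to combine two ingredients: the Noetherianity of $\C[\xb]$ to force stabilization of a descending chain of varieties, together with a self-similar recursion $X_{m+1} = X \cap F^{-1}(X_m)$ to propagate stabilization to all larger indices and to match the stabilized set with $S_{(F,X)}$.

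First I would check that each $X_m$ is an algebraic variety. If $X = \V(f_1,\ldots,f_k)$, then for each $i \geq 0$, $F^{-i}(X) = \V(f_1 \circ F^{(i)},\ldots,f_k \circ F^{(i)})$ is a variety, since composing with the polynomial map $F^{(i)}$ yields polynomials; and finite intersections of varieties are varieties. Hence $X_{m+1} \subseteq X_m$ gives a descending chain of varieties, which corresponds to an ascending chain of defining ideals $I(X_0) \subseteq I(X_1) \subseteq \cdots$ in $\C[\xb]$. By the Hilbert Basis Theorem, this chain stabilizes, producing some $N \in \N$ with $I(X_N) = I(X_{N+1})$, equivalently $X_N = X_{N+1}$.

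Next I would establish the key identity
\[
X_{m+1} \;=\; X \cap F^{-1}(X_m),
\]
which follows by unwinding the definitions together with $F^{-1}(A \cap B) = F^{-1}(A) \cap F^{-1}(B)$. Given any $N$ with $X_N = X_{N+1}$, I would then argue by induction on $k \geq 1$ that $X_{N+k} = X_N$: assuming $X_{N+k} = X_N$, the identity yields $X_{N+k+1} = X \cap F^{-1}(X_{N+k}) = X \cap F^{-1}(X_N) = X_{N+1} = X_N$. This propagation step is what I expect to be the most delicate point, since the analogous identity does not hold at the ``front'' of the sequence; the argument works only because we are extending stabilization forwards.

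Finally I would identify the stabilized variety with $S_{(F,X)}$. Unwinding the definition of preimage, $x \in X_m$ holds iff $x \in X$ and $F^{(i)}(x) \in X$ for every $1 \leq i \leq m$. Hence
\[
S_{(F,X)} \;=\; \bigcap_{m \in \N} X_m.
\]
Since, by the preceding step, the chain is eventually constant equal to $X_N$ for any stabilization index $N$, this intersection equals $X_N$. This yields $X_N = S_{(F,X)}$ for every $N$ with $X_N = X_{N+1}$, completing the proof.
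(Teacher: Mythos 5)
Your proof is correct, and it follows essentially the same route as the argument the paper relies on (it defers to \cite[Proposition 2.3]{bayarmagnaiIssac}): Noetherianity of $\C[\xb]$ forces the descending chain of varieties $X_m$ to stabilize, the identity $X_{m+1}=X\cap F^{-1}(X_m)$ propagates stabilization from any index $N$ with $X_N=X_{N+1}$ to all larger indices, and $S_{(F,X)}=\bigcap_m X_m$ then gives $X_N=S_{(F,X)}$. No gaps; the only cosmetic alternative is to note that $X_N=X_{N+1}$ directly gives $F(X_N)\subseteq X_N$, so $X_N\subseteq S_{(F,X)}$, which shortcuts the induction but is the same idea.
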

Building on Proposition~\ref{prop:stabilization}, the following algorithm computes invariant sets by means of an iterative outer approximation that converges in finitely many steps. The algorithm requires the following procedures:
\begin{itemize}
    \item \texttt{Compose}: given two sequences of polynomials $\gb=(g_1,\ldots, g_m)$ and $F=(F_1,\ldots, F_n)$ in $\Q[\xb]$, it returns  
    \[
        g_1\big(F_1(\xb), \ldots, F_n(\xb)\big), \;\ldots,\; g_m\big(F_1(\xb), \ldots, F_n(\xb)\big).
    \] 
    
    \item \texttt{InRadical}: given a sequence of polynomials $\widetilde{\gb}$ and a finite set $S \subset \Q[\xb]$, it outputs \texttt{True} if $\widetilde{\gb} \subset \sqrt{\langle S\rangle}$, and \texttt{False} otherwise.
\end{itemize}
These procedures are standard routines in symbolic computation and the latter one can be carried out efficiently using techniques such as Gr\"obner bases \cite{cox2013ideals}. For further details, we refer the reader to \cite{bayarmagnai2024algebraic}.
\begin{algorithm}[H]
\caption{\InvariantSet}\label{algo1}
\begin{algorithmic}[1]
\Require sequences $\gb$ and~$F = (F_1,\ldots, F_n)$  in $\mathbb{Q}[\xb]$.
\Ensure polynomials of common zero-set $S_{(F,{\V(\gb)})}$.\vspace*{.3em}
\State\label{Step:algo1.1} $S \gets \{\gb\};$
\State\label{Step:algo1.2} $\gbt \gets \compose(\gb,\,F);$
\While{ $\InRadical(\gbt ,\,S)==\texttt{False}$}
\label{Step:algo1.3}\State\label{Step:algo1.4} $S \gets S \cup \{\gbt\};$
\State\label{Step:algo1.5} $\gbt \gets \compose(\gbt,\,F);$
\EndWhile
\State \Return $S$;
\end{algorithmic}
\end{algorithm}
The termination and correctness of Algorithm~\ref{algo1} follow from Proposition~\ref{prop:stabilization} and \cite[Theorem~2.4]{bayarmagnaiIssac}.

\begin{example}
    Consider the polynomial map and the algebraic variety: 
    $$F = (2x_1 - 3x_2,\, x_1 + x_2) \text{ and } X = \V(x_1^2 - x_2^2 + x_1x_2),$$  
    respectively.
    On input $F$ and $g = x_1^2 - x_2^2 + x_1x_2$, Algorithm~\ref{algo1} computes the invariant set of $(F, X)$ through the following steps:  

    \begin{itemize}
        \item At Step~\ref{Step:algo1.1}, $S \gets \{g\}$.  
        At Step~\ref{Step:algo1.2}, $\widetilde{g}$ is set to  
        \[
            \compose(g,F) = 5x_1^2 - 15x_1x_2 + 5x_2^2.
        \]  

        \item At Step~\ref{Step:algo1.3}, a Gr\"obner basis of the ideal $\langle g,\, 1 - t\widetilde{g} \rangle$ is computed which makes the following decision  
        \[
            \InRadical(\widetilde{g}, S) = \texttt{False}.
        \]  

        \item At Step~\ref{Step:algo1.4}, $S \gets \{g,\, g\circ F\}$.  
        At Step~\ref{Step:algo1.5}, $\widetilde{g}$ is updated as  
        \[
            \compose(\widetilde{g},F) = -5x_1^2 - 35x_1x_2 + 95x_2^2.
        \]  

        \item This time, $\InRadical(\widetilde{g}, S) = \texttt{True}$, so the while-loop terminates.  
    \end{itemize}

    \noindent Therefore, the invariant set is $S_{(F,\V(g))}=\V(g,\, g\circ F)$.
\end{example}

\subsection{Invariants sets for multiple polynomial loops}
In many cases, we are interested not in a single polynomial loop but in systems with multiple loops or updates. This subsection generalizes the notion of invariant sets to the setting of multiple polynomial loops.

\begin{definition}\label{def: composemaps}
Let $F_1, \ldots, F_k : \mathbb{C}^n \to \mathbb{C}^n$ be polynomial maps.  
For any $m \in \mathbb{N}$ and any sequence $i_1, \ldots, i_m \in [k]$, define the polynomial map
$$F_{i_1, \ldots, i_m}(\xb) = F_{i_m} \big(F_{i_{m-1}}(\cdots F_{i_1}(\xb) )\big).$$ 
\end{definition}

The next definition extends Definition~\ref{def:invariantset} to the setting of multiple polynomial maps.  

\begin{definition}\label{def:invariantsetextend}
Consider the set $X \subseteq \Cn$ and the polynomial maps $F_1,\ldots,F_k : \Cn \to \Cn$. The invariant set of $((F_1,\ldots,F_k),X)$ is defined as
\begin{center}
    \scalebox{0.9}{$\displaystyle\Big\{x \in X \mid\forall m \geq 1, \forall  i_1,\ldots, \forall i_m\in [k],\; F_{i_m}\big(\ldots(F_{i_1}(x)\big) \in X \Big\}$}
\end{center} 
where $[k] = \{1,\ldots,k\}$. We note it $S_{((F_1,\ldots,F_k),X)}$.
\end{definition}

The next proposition introduce an algorithm for computing invariant sets of multiple polynomial maps. For more details, this is an adaptation of in~\cite[Theorem~3.14]{bayarmagnai2024algebraic}.

\begin{proposition}\label{prop: invariantsetbranch} 
    Let $F_1, \ldots, F_k$ and $\gb$ be sequences of polynomial in $\Q[\xb]$.
    Define \InvariantSetBranch the modified version of Algorithm~\ref{algo1}, obtained by replacing  
    \begin{itemize}\small 
        \item $\compose(\gb, F)$ with the tuple \vspace*{-.5em} 
        $$\big((\compose(\gb, F_1), \ldots, \compose(\gb, F_k)\big)\vspace*{-.5em}$$ 
        in Step~\ref{Step:algo1.2};, and \vspace*{1em} 
        \item $\compose(\widetilde{\gb}, F)$ with the tuple\vspace*{-.5em} 
        $$\big(\compose(\widetilde{\gb}, F_1), \ldots, \compose(\widetilde{\gb}, F_k)\big)\vspace*{-.5em}$$
        in Step~\ref{Step:algo1.5}.  
    \end{itemize}
   Then, on input the $F_i$ and $\gb$, \InvariantSetBranch computes polynomials whose common zero-set is the invariant set of $((F_1,\dotsc,F_k), \V(\gb))$.
   
\end{proposition}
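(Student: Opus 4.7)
The plan is to mirror the proof strategy of the single-map case (Proposition~\ref{prop:stabilization} and \cite[Theorem~2.4]{bayarmagnaiIssac}) by constructing a decreasing chain of algebraic varieties whose stabilization captures the invariant set, and then verifying that the modified algorithm detects this stabilization through the \InRadical test.

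First I would define
\[
X_m \;=\; \V(\gb)\;\cap\;\bigcap_{j=1}^{m}\;\bigcap_{(i_1,\ldots,i_j)\in [k]^j} F_{i_1,\ldots,i_j}^{-1}\bigl(\V(\gb)\bigr),
\]
and establish the one-step recurrence
\[
X_m \;=\; \V(\gb)\cap F_1^{-1}(X_{m-1})\cap\cdots\cap F_k^{-1}(X_{m-1}),
\]
obtained by factoring out the leading (resp. trailing, depending on convention) index of each composition $F_{i_1,\ldots,i_j}$. Because $(X_m)_{m\in\N}$ is a decreasing chain of algebraic subvarieties of $\Cn$, Noetherianity of $\C[\xb]$ yields some $N$ with $X_N = X_{N+1}$. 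The recurrence then forces $F_i(X_N)\subseteq X_N$ for every $i\in[k]$, so an easy induction shows that $X_N$ is stable under every composition $F_{i_1,\ldots,i_m}$; hence $X_N \subseteq S_{((F_1,\ldots,F_k),\V(\gb))}$. The reverse inclusion $S_{((F_1,\ldots,F_k),\V(\gb))}\subseteq X_m$ holds at every index by definition of the invariant set, yielding the desired equality. This is the multi-map analogue of Proposition~\ref{prop:stabilization}.

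Next I would trace \InvariantSetBranch and verify that after $m$ executions of the while loop, $S$ consists exactly of $\gb$ together with the tuples $\compose(\gb, F_{i_1,\ldots,i_j})$ for every $1\leq j\leq m$ and every sequence in $[k]^j$, so that $\V(S) = X_m$; meanwhile the tuple $\gbt$ handed to \InRadical enumerates the new compositions of length $m+1$. Consequently $\InRadical(\gbt, S)$ returns \texttt{True} precisely when these new polynomials vanish on $\V(S) = X_m$, equivalently when $X_{m+1}=X_m$. Combined with the stabilization result above, this gives termination of the algorithm together with the equality $\V(\text{output}) = X_N = S_{((F_1,\ldots,F_k),\V(\gb))}$.

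The main difficulty I foresee is the combinatorial bookkeeping: checking that iterated applications of $\compose(\cdot,F_j)$ inside the tuple $\gbt$ genuinely enumerate every composition $\gb\circ F_{i_1,\ldots,i_m}$ in the ordering of Definition~\ref{def: composemaps}, and that this enumeration is compatible with the one-step recurrence for $X_m$. The reliance on \InRadical rather than plain ideal membership is the same subtlety as in the single-map case, since only the common zero-set $\V(S)$, and not the ideal $\langle S\rangle$ itself, is what matters for the invariant set.
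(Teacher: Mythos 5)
Your proposal is correct and follows essentially the same route as the paper: the paper itself only cites an adaptation of \cite[Theorem~3.14]{bayarmagnai2024algebraic}, and that adaptation is precisely your argument, namely the multi-map analogue of Proposition~\ref{prop:stabilization} (stabilization of the decreasing chain $X_m$ built from preimages under all compositions, using the one-step recurrence $X_m=\V(\gb)\cap\bigcap_i F_i^{-1}(X_{m-1})$) combined with tracing \InvariantSetBranch so that $\V(S)=X_m$ and the \InRadical test detects $X_{m+1}=X_m$. Your bookkeeping of compositions (new index applied innermost, matching Definition~\ref{def: composemaps}) and the radical-membership subtlety are handled as in the single-map case, so no gap remains.
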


\section{From loops to polynomial systems}\label{section:generateloops}

In this section, we present a method for identifying all (branching) loops that satisfy given polynomial invariants by formulating the problem as a polynomial system over invariant sets. This reduces loop synthesis to solving a polynomial system, which will be examined in detail in the next section.

\subsection{Single-path loops}
We first consider \emph{single-path loops}, i.e. loops defined by a single polynomial update map $F:\C^n \to \C^n$ with an inequation $h(x)\neq 0$ in the guard condition.  
This case provides the basic setting in which to study invariant sets, before extending the results to branching loops in Section~\ref{sec: branching}.

The following lemma, which is an immediate consequence of Definition~\ref{def:invariantset}, shows that invariant sets are preserved under intersections of algebraic varieties.
\begin{lemma}\label{prop:intersectioninv}
    Let $F:\C^n \to \C^n$ be a polynomial map, and let $X_1,\ldots,X_m \subseteq \C^n$ be algebraic varieties.  
    Then
  $$ S_{(F,\, X_1 \cap \cdots \cap X_m)} 
        \;=\; S_{(F,X_1)} \cap \cdots \cap S_{(F,X_m)}.$$
\end{lemma}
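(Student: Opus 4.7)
The plan is to prove the equality by double inclusion, working directly from Definition~\ref{def:invariantset}. The statement is essentially tautological: the defining condition of $S_{(F,X)}$ has two parts, namely $x\in X$ and $F^{(m)}(x)\in X$ for every $m\in\N$, and both parts commute with the set-theoretic intersection $X_1\cap\cdots\cap X_m$. No hypothesis beyond the $X_i$ being subsets of $\C^n$ is actually needed (the algebraic variety structure plays no role here), so the proof is purely set-theoretic.

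For the inclusion $S_{(F,\,X_1\cap\cdots\cap X_m)}\subseteq S_{(F,X_1)}\cap\cdots\cap S_{(F,X_m)}$, I would pick $x$ in the left-hand side and fix an arbitrary index $i\in[m]$. By definition, $x\in X_1\cap\cdots\cap X_m$ and $F^{(j)}(x)\in X_1\cap\cdots\cap X_m$ for all $j\in\N$. Projecting onto the $i$-th factor gives $x\in X_i$ and $F^{(j)}(x)\in X_i$ for all $j\in\N$, which is exactly the definition of $x\in S_{(F,X_i)}$. Since $i$ was arbitrary, $x$ belongs to the intersection on the right.

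For the reverse inclusion, I would pick $x\in S_{(F,X_1)}\cap\cdots\cap S_{(F,X_m)}$. Then for every $i\in[m]$ we have $x\in X_i$ and $F^{(j)}(x)\in X_i$ for all $j\in\N$. Taking the intersection over $i$ yields $x\in X_1\cap\cdots\cap X_m$ and $F^{(j)}(x)\in X_1\cap\cdots\cap X_m$ for all $j\in\N$, so $x\in S_{(F,\,X_1\cap\cdots\cap X_m)}$.

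There is no real obstacle: the only step to be careful about is making sure that the universal quantifier over $m\in\N$ is handled uniformly in $i$, which is automatic since the same iterate $F^{(j)}(x)$ is checked against each $X_i$. The result can therefore be stated in essentially one line once the definition is unfolded.
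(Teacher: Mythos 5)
Your double-inclusion argument is correct and matches the paper's intent exactly: the paper gives no explicit proof, stating the lemma as an immediate consequence of Definition~\ref{def:invariantset}, and your unfolding of that definition (including the observation that the variety structure of the $X_i$ is not needed) is precisely that immediate argument.
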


We now provide a necessary and sufficient condition for determining whether a given set of polynomials forms invariants of a loop. 
Furthermore, the next proposition extends~\cite[Proposition~2.7]{bayarmagnai2024algebraic}, where the statement was established for a single polynomial invariant.  

\begin{proposition}\label{prop:invarianttest}
    Let $h,g_1,\ldots,g_m \in \C[\xb]$, and let $z$ be a new indeterminate.  
    Define  
    $$
        X = \V(zg_1,\ldots,zg_m) \subset \C^{n+1}.
    $$  
    Let $F:\Cn \to \Cn$ be a polynomial map, and define  
    $$
        G_h(\xb,z) = (F(\xb),\, zh(\xb)).
    $$  
    Then, for $\ab \in \Cn$, one has $\gb \subseteq I_{\mL(\ab,h,F)}$ if and only if $(\ab,1) \in S_{(G_h,X)}$.
\end{proposition}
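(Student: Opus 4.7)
The plan is to compute how the extra coordinate $z$ evolves under iteration of $G_h$, and then to show that membership of $(\ab,1)$ in $S_{(G_h,X)}$ is equivalent, case by case, to the invariant property of Definition~\ref{def:PI}.

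A direct induction on the iteration index $k$ gives
$$G_h^{(k)}(\ab,1) = \Big(F^{(k)}(\ab),\; \prod_{l=0}^{k-1} h(F^{(l)}(\ab))\Big),$$
with the empty product equal to $1$. Writing $z_k$ for this second coordinate and recalling that $X = \V(zg_1,\ldots,zg_m)$, Definition~\ref{def:invariantset} applied to $G_h$ yields that $(\ab,1) \in S_{(G_h,X)}$ if and only if
$$z_k \, g_i(F^{(k)}(\ab)) = 0 \quad \text{for all } k \geq 0 \text{ and all } i \in \{1,\ldots,m\}.$$

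For the ``if'' direction, I would assume these product identities and split on the orbit of $\ab$. Either $h(F^{(l)}(\ab)) \neq 0$ for every $l$, in which case each $z_k$ is nonzero and forces $g_i(F^{(k)}(\ab)) = 0$ for all $k$, matching the first alternative of Definition~\ref{def:PI}; or there is a smallest $\tau$ with $h(F^{(\tau)}(\ab)) = 0$, and then $z_k \neq 0$ for $k \leq \tau$ while $z_k = 0$ for $k > \tau$, so $g_i(F^{(k)}(\ab)) = 0$ for all $k \leq \tau$, which is precisely the second alternative with halting time $\tau$ common to every $g_i$. Conversely, assuming each $g_i$ is an invariant, the first alternative of Definition~\ref{def:PI} gives $z_k g_i(F^{(k)}(\ab)) = 0$ outright, while the second gives the identity for $k \leq \tau$ from $g_i(F^{(k)}(\ab)) = 0$ and for $k > \tau$ because $h(F^{(\tau)}(\ab))$ appears as a factor of $z_k$.

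The only subtlety worth flagging is that the iteration index appearing in the second alternative of Definition~\ref{def:PI} is intrinsic to the loop rather than to the particular invariant: the conditions $h(F^{(l)}(\ab)) \neq 0$ for $l < m$ together with $h(F^{(m)}(\ab)) = 0$ pin this index down as the first vanishing time of $h$ along the orbit of $\ab$ under $F$. This shared halting time is what makes the case analysis compatible across all the $g_i$ simultaneously in the ``if'' direction. Beyond this observation, the argument is a direct unwinding of definitions, and I do not anticipate any serious technical obstacle.
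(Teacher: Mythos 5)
Your proof is correct, but it takes a different route from the paper. The paper disposes of the statement in two steps: it applies the single-invariant criterion of~\cite[Proposition~2.7]{bayarmagnai2024algebraic} to each variety $X_i=\V(zg_i)$, obtaining $g_i\in I_{\mL(\ab,h,F)}\Leftrightarrow(\ab,1)\in S_{(G_h,X_i)}$, and then invokes Lemma~\ref{prop:intersectioninv} together with $X=X_1\cap\cdots\cap X_m$ to identify $S_{(G_h,X)}$ with $\bigcap_i S_{(G_h,X_i)}$. You instead argue from first principles: you compute $G_h^{(k)}(\ab,1)=\bigl(F^{(k)}(\ab),\,\prod_{l=0}^{k-1}h(F^{(l)}(\ab))\bigr)$, translate membership in $S_{(G_h,X)}$ into the vanishing of $z_k\,g_i(F^{(k)}(\ab))$ for all $k$ and $i$, and split on whether $h$ ever vanishes along the orbit; this is in effect a self-contained re-proof of the cited single-invariant result, extended to several $g_i$ at once, and your case analysis (including the verification in both directions) is sound. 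What the paper's route buys is brevity and modularity, since the dynamical bookkeeping is delegated to the earlier paper and the multi-invariant case follows formally from the intersection lemma; what your route buys is independence from both the citation and Lemma~\ref{prop:intersectioninv}, plus an explicit justification of the point the modular argument leaves implicit, namely that the halting index in Definition~\ref{def:PI} is the first vanishing time of $h$ along the orbit and hence is common to all the $g_i$, which is exactly what makes the simultaneous treatment consistent.
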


\begin{proof}
    For each $i \in \{1,\ldots,m\}$, let $X_i = \V(zg_i) \subset \C^{n+1}$ be an algebraic variety.  
    By~\cite[Proposition~2.7]{bayarmagnai2024algebraic}, we have  
    $$
        g_i \in I_{\mL(\ab,h,F)} 
        \;\;\Longleftrightarrow\;\; 
        (\ab,1) \in S_{(G_h,X_i)}.
    $$  
    Hence, $g_1,\ldots,g_m \in I_{\mL(\ab,h,F)}$ if and only if  
    $$
        (\ab,1) \in S_{(G_h,X_1)} \cap \cdots \cap S_{(G_h,X_m)}.
    $$  
    By Lemma~\ref{prop:intersectioninv}, it follows that  
    $$
        S_{(G_h,X)} = S_{(G_h,X_1)} \cap \cdots \cap S_{(G_h,X_m)}.
    $$  
    Hence, $g_1,\ldots,g_m \in I_{\mL(\ab,h,F)}$ if and only if $(\ab,1) \in S_{(G_h,X)}$.  \qed
\end{proof}

The next proposition provides a necessary and sufficient condition for loops with a given structure to satisfy specified polynomial invariants.

\begin{proposition}\label{prop:loopgenerator}
    Let $\fb_i =(f_{i,1},\ldots, f_{i,l_i})$ be a sequence of polynomials in $\C[\xb]$ for every $i\leq n$, and define $\fb = (\fb_1,\ldots,\fb_n)$.  
    Let $z, y_{i,1},\ldots,y_{i,l_i}$ be new indeterminates for every $i\leq n$, and let $h \in \C[\xb]$.   
    Let $H_{h}$ be the map
    \begin{center}
        \scalebox{.9}{$
        \begin{array}{ll}
        \displaystyle\C^{n+\sum_{j=1}^n l_j + 1} &\hspace*{-0cm}\longrightarrow \hspace*{.2cm} \C^{n+\sum_{j=1}^n l_j + 1}\\[.5em]
        (\xb,\yb,z)&\hspace*{-.8cm}\longmapsto
        \displaystyle\left(\sum_{i=1}^{l_1}y_{1,i}f_{1,i}(\xb),\ldots, \displaystyle\sum_{i=1}^{l_n}y_{n,i}f_{n,i}(\xb),\,\yb,\,zh(\xb)\right).
        \end{array}
        $}
    \end{center}
    
    Let $\gb = (g_1,\ldots,g_m)$ be a sequence of polynomials in $\C[\xb]$, viewed as elements of $\C[\xb,\yb,z]$, and define  
    $$
        X = \V(zg_1,\ldots,zg_m) \subset \C^{\,n+\sum_{j=1}^n l_j+1}.
    $$  
    Then, for any $\ab \in \C^n$,  
    $$
        \CL(\ab,h,\fb;\gb) \;=\;
        \Big\{\, \bb \in \C^{\,\sum_{j=1}^n l_j} \;\Big|\; (\ab,\bb,1) \in S_{(H_{h},X)} \Big\}.
    $$
\end{proposition}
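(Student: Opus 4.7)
The plan is to reduce the statement to Proposition~\ref{prop:invarianttest} by observing that $H_h$ acts as the identity on the $\yb$-coordinates, so the $y_{i,j}$ behave as parameters frozen at their initial value. Consequently, starting the iteration of $H_h$ at a point $(\ab,\bb,1)$ should be equivalent to iterating, on the $(\xb,z)$-coordinates alone, the map $G_h(\xb,z) = (F_{\fb,\bb}(\xb),\, zh(\xb))$ from Proposition~\ref{prop:invarianttest}, with $F = F_{\fb,\bb}$ as in Definition~\ref{def:Ffb}.

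First, I would fix $\ab \in \C^n$ and $\bb \in \C^{\sum_j l_j}$ and unfold $H_h^{(m)}(\ab,\bb,1)$ for $m \geq 0$. A short induction on $m$ shows that the middle block of coordinates remains constantly equal to $\bb$, because the $\yb$-block of $H_h$ is the identity. Substituting $\yb = \bb$ into the first block of $H_h$, the polynomial $\sum_i y_{j,i} f_{j,i}(\xb)$ becomes $\sum_i b_{j,i}f_{j,i}(\xb) = F_j(\xb)$, where $F = F_{\fb,\bb}$. Hence if I denote $(\xb_m,z_m) = G_h^{(m)}(\ab,1)$, then $H_h^{(m)}(\ab,\bb,1) = (\xb_m,\bb,z_m)$ for every $m \in \N$.

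Next, note that the defining polynomials of $X$, namely $zg_1,\ldots,zg_m$, do not involve the indeterminates $\yb$. Therefore $(\xb_m,\bb,z_m) \in X$ holds if and only if $(\xb_m,z_m) \in \V(zg_1,\ldots,zg_m) \subset \C^{n+1}$, which is exactly the variety $X'$ appearing in Proposition~\ref{prop:invarianttest}. Taking this equivalence over all $m \geq 0$ yields
\[
    (\ab,\bb,1) \in S_{(H_h,X)} \iff (\ab,1) \in S_{(G_h,X')}.
\]

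Finally, I would apply Proposition~\ref{prop:invarianttest} with $F = F_{\fb,\bb}$ to rewrite the right-hand side as $\gb \subseteq I_{\mL(\ab,h,F_{\fb,\bb})}$, which by Definition~\ref{def:coefset} is the defining condition of $\bb \in \CL(\ab,h,\fb;\gb)$. Combining the two equivalences gives the claimed description of the coefficient set. The argument is essentially bookkeeping; the only point that requires a bit of care is making the induction step and the parameter/state separation entirely explicit so that the reduction to Proposition~\ref{prop:invarianttest} is unambiguous, but I do not expect any genuine obstacle.
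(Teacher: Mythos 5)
Your proposal is correct and takes essentially the same route as the paper: both arguments hinge on the observation that $H_{h}$ fixes the $\yb$-block, so iterating from $(\ab,\bb,1)$ reproduces the dynamics of $F_{\fb,\bb}$, combined with an application of Proposition~\ref{prop:invarianttest}. The only (harmless) difference is ordering: the paper applies Proposition~\ref{prop:invarianttest} to the auxiliary loop $\mL((\ab,\bb),h,H)$ in the extended variables and then identifies its invariant ideal with that of $\mL(\ab,h,F_{\fb,\bb})$, whereas you first eliminate $\yb$ at the level of invariant sets, showing $(\ab,\bb,1)\in S_{(H_{h},X)}$ iff $(\ab,1)\in S_{(G_h,X')}$, and then invoke the proposition with $F=F_{\fb,\bb}$.
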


\begin{proof}
    Write 
    $
        H_{h}(\xb,\yb,z) = \big(H(\xb,\yb),\, zh(\xb)\big),
    $
    where  
    $$
        H(\xb,\yb) =
        \left(
            \sum_{i=1}^{l_1} y_{1,i} f_{1,i}(\xb),\;\ldots,\;
            \sum_{i=1}^{l_n} y_{n,i} f_{n,i}(\xb),\;\yb
        \right)\!.
    $$  

    By Proposition~\ref{prop:invarianttest}, $g_1,\ldots,g_m$ are polynomial invariants of $\mL((\ab,\bb),h,H)$ if and only if $(\ab,\bb,1) $ is contained in $S_{(H_{h},X)}$.  
    Since the value of $\yb$ remains fixed at $\bb$ under iteration, for all $N \in \N$, we have  
    $$
        H^N(\xb,\bb) = \big(F_{\fb,\bb}^N(\xb),\,\bb\big).
    $$    
    Hence, $\gb \subset I_{\mL(\ab,h,F_{\fb,\bb})}$ if and only if $\gb \subset I_{\mL((\ab,\bb),h,H)}$, which in turn holds if and only if $(\ab,\bb,1) \in S_{(H_{h},X)}$.  \qed
\end{proof}

Since the invariant set is an algebraic variety, Proposition~\ref{prop:loopgenerator} implies that $\CL(\ab,h,\fb;\gb)$ is likewise an algebraic variety. Its defining equations are obtained by substituting $\xb = \ab$ and $z = 1$ into the system that defines the invariant set.

The following algorithm computes a generating set of polynomials for $\CL(\ab,h,\fb;\gb)$.



\begin{algorithm}[H]
\caption{\GenerateLoops}\label{algo:generateloops}
\begin{algorithmic}[1]
\Require$ h, g_1,\ldots, g_m, f_{1,1},\ldots, f_{1, l_1},\ldots,f_{n,1},\ldots, f_{n, l_n}$ in $\Q[\xb]$ and $\ab\in \Q^n$.
\Ensure A set of polynomials $\{P_1,\ldots, P_s\}$ such that $\CL(\ab,h,\fb;g)$ is $\V(P_1,\ldots, P_s)$\vspace*{.5em}
\State{\small $H_{h} \gets \left(\displaystyle\sum_{i=1}^{l_1}y_{1,i}f_{1,i}(\xb),\ldots, \displaystyle\sum_{i=1}^{l_n}y_{n,i}f_{n,i}(\xb),\yb, zh(\xb)\right)$}\vspace*{.5em}
\State\label{Step:A1.3}$\{Q_1,\ldots, Q_s\} \gets \InvariantSet((zg_1,\ldots, zg_m), H_{h});$\vspace*{.5em}
\State\label{Step:A1.4}$\{P_1(\yb),\ldots, P_s(\yb)\}\gets \{Q_1(\ab,\yb,1),\ldots, Q_s(\ab,\yb,1)\} $\vspace*{.5em}
\State\Return $\{P_1,\ldots, P_s\}$;
\end{algorithmic}
\end{algorithm}
\begin{theorem}
    Let $\ab \in \Q^n$, $h \in \C[\xb]$, $\gb=(g_1,\ldots,g_m)$, and $\fb_1=(f_{1,1},\ldots,f_{1,l_1}), \ldots, \fb_n=(f_{n,1},\ldots,f_{n,l_n})$ be sequences of polynomials in $\Q[\xb]$. 
    Set $\fb=(\fb_1,\ldots,\fb_n)$.  
    Given input $(\ab,h,\fb,\gb)$, Algorithm~\ref{algo:generateloops} outputs a set of polynomials whose vanishing locus is $\CL(\ab,h,\fb;\gb)$.
\end{theorem}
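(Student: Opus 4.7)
The plan is to show that the output $\{P_1,\ldots,P_s\}$ of Algorithm~\ref{algo:generateloops} defines exactly the set $\CL(\ab,h,\fb;\gb)$, by chaining together the characterization of $\CL(\ab,h,\fb;\gb)$ as a slice of an invariant set (Proposition~\ref{prop:loopgenerator}) with the correctness of \InvariantSet (Algorithm~\ref{algo1}). The proof is essentially a bookkeeping argument: almost everything has already been done in Propositions~\ref{prop:invarianttest} and~\ref{prop:loopgenerator}, and what remains is to verify that the substitution $\xb \gets \ab,\ z\gets 1$ performed in Step~\ref{Step:A1.4} correctly translates the invariant-set defining equations into defining equations for $\CL(\ab,h,\fb;\gb)$.

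First, I would observe that the polynomial map $H_h$ constructed in Step~1 of the algorithm is exactly the map $H_h$ appearing in Proposition~\ref{prop:loopgenerator}, and that the variety $X = \V(zg_1,\ldots,zg_m)\subset \C^{n+\sum l_j+1}$ is defined by the sequence $(zg_1,\ldots,zg_m)$ passed to \InvariantSet in Step~\ref{Step:A1.3}. By the correctness of \InvariantSet (which follows from Proposition~\ref{prop:stabilization} and~\cite[Theorem~2.4]{bayarmagnaiIssac}), the polynomials $Q_1,\ldots,Q_s$ returned by Step~\ref{Step:A1.3} satisfy
\[
    S_{(H_h,X)} \;=\; \V(Q_1,\ldots,Q_s) \;\subset\; \C^{\,n+\sum_{j=1}^n l_j+1}.
\]

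Next, I would apply Proposition~\ref{prop:loopgenerator}, which states that
\[
    \CL(\ab,h,\fb;\gb) \;=\; \Big\{\bb\in\C^{\sum l_j} \;\Big|\; (\ab,\bb,1)\in S_{(H_h,X)}\Big\}.
\]
Combining the two displayed equalities gives
\[
    \bb \in \CL(\ab,h,\fb;\gb)
    \;\Longleftrightarrow\; Q_i(\ab,\bb,1) = 0 \text{ for all } i = 1,\ldots,s.
\]
Since Step~\ref{Step:A1.4} defines $P_i(\yb) = Q_i(\ab,\yb,1)$, this is precisely the statement that $\bb\in\V(P_1,\ldots,P_s)$. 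Hence $\CL(\ab,h,\fb;\gb) = \V(P_1,\ldots,P_s)$, which is the desired conclusion.

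There is no real obstacle here; the only point requiring a moment of attention is to confirm that \InvariantSet terminates and is correct on the input $(zg_1,\ldots,zg_m, H_h)$, but this is immediate from the earlier cited results, and the polynomials $zg_i$ and $H_h$ lie in $\Q[\xb,\yb,z]$ as required (since $\ab\in\Q^n$ and the $f_{i,j}, g_i, h$ are in $\Q[\xb]$, while the $y_{i,j}$ and $z$ are new indeterminates). Termination of the overall procedure therefore reduces to termination of \InvariantSet, and correctness reduces to the two propositions already established.
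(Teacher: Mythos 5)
Your proposal is correct and follows essentially the same route as the paper's proof: invoke the correctness of \InvariantSet to get $S_{(H_h,X)}=\V(Q_1,\ldots,Q_s)$, apply Proposition~\ref{prop:loopgenerator} to identify $\CL(\ab,h,\fb;\gb)$ with the slice $(\ab,\cdot,1)$ of that invariant set, and conclude via the substitution in Step~\ref{Step:A1.4}. Your added remarks on termination and on the inputs lying in $\Q[\xb,\yb,z]$ are fine but not needed beyond what the paper already cites.
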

\begin{proof}
    We prove the correctness of Algorithm~\ref{algo:generateloops}.  
    Let $X = \V(zg_1,\ldots,zg_m) \subset \C^{\,n + l_1 + \cdots + l_n + 1}$.  
    At Step~\ref{Step:A1.3}, by \cite[Theorem~2.4]{bayarmagnaiIssac}, \InvariantSet outputs  
    $$
        \{Q_1(\xb,\yb,z),\ldots,Q_s(\xb,\yb,z)\},
    $$  
    such that $S_{(H_{h},X)}$ is the common zero-set of $\{Q_1,\ldots,Q_s\}$.  
    Hence, by Proposition~\ref{prop:loopgenerator}, $g_1,\ldots,g_m$ are polynomial invariants of $\mL(\ab,h,F_{\fb,\bb})$ if and only if  
    $$
        Q_1(\ab,\bb,1) = \cdots = Q_s(\ab,\bb,1) = 0.
    $$  
    Consequently, $g_1,\ldots,g_m \in I_{\mL(\ab,h,F_{\fb,\bb})}$ if and only if  
    $$
        P_1(\bb) = \cdots = P_s(\bb) = 0,
    $$  
    which establishes the correctness of Algorithm~\ref{algo:generateloops}.\qed
\end{proof}

In the examples below, to distinguish program variables from the coefficients of the update maps, we denote the coefficients by $\lambda$ instead of $y$.

\begin{example}\label{exa:algo2}
    Consider the polynomial map
    $$
        F(x_1,x_2,x_3) = (\lambda_1 x_1^3 + \lambda_2 x_2^2,\;\lambda_3 x_1 + \lambda_4 x_2^2,\;\lambda_5 x_1),
    $$
    and the polynomial invariants 
    $g_1 = x_2^2 - x_1$ and $g_2 = x_3^3 + 2x_2^2 - x_1$.  
    Our goal is to determine all loops of this form with precondition $(x_1,x_2,x_3)=(1,1,-1)$ and postcondition $(x_2^2 - x_1 = 0,\; x_3^3 + 2x_2^2 - x_1 = 0)$:

    \programboxappendix[0.6\linewidth]{
    \State $(x_1,x_2,x_3)\gets(1,1,-1)$
    \While{true}
    \State $\begin{pmatrix}
    x_1 \\
    x_2 \\
    x_3
    \end{pmatrix}
    \xleftarrow{F}
    \begin{pmatrix}
    \lambda_1 x_1^3 + \lambda_2 x_2^2 \\
    \lambda_3 x_1 + \lambda_4 x_2^2 \\
    \lambda_5 x_1
    \end{pmatrix}$
    \EndWhile
    }
    \smallskip
    
    \noindent Let $F = (F_1,F_2,F_3)$ and $\gb = \{g_1,g_2\}$, and define
    $$
        \fb = \{\{x_1^3, x_2^2\},\;\{x_1, x_2^2\},\;\{x_1\}\}.
    $$  
    From the loop structure we have
    $$
        F_1 \in \Span\{x_1^3, x_2^2\},\,
        F_2 \in \Span\{x_1, x_2^2\},\,
        F_3 \in \Span\{x_1\}.
    $$  
    Thus, the input to Algorithm~\ref{algo:generateloops} is $(\{1\},\gb,\fb,\{1,1,-1\})$.  
    Let $X = \V(zg_1,zg_2) \subset \C^9$ and define the map
    $$
        H_{h}(\xb,\yb,z) =
        (y_1 x_1^3 + y_2 x_2^2,\;
        y_3 x_1 + y_4 x_2^2,\;
        y_5 x_1,\;
        \yb,\; z).
    $$  
    
    At Step~\ref{Step:A1.3}, given input $H_{h}$ and $X$, \texttt{InvariantSet} outputs polynomials
    $$
        \{Q_1(\xb,\yb,z),\ldots,Q_6(\xb,\yb,z)\}.
    $$  
    whose vanishing locus is the invariant set $S_{(H_{h},X)}$.
    Substituting the initial values of the loop into these polynomials yields four nonzero polynomials $P_1(\yb),\ldots$, $P_4(\yb)$, whose vanishing locus is $\CL((1,1,-1),1,\fb;\gb)$:
    \begin{center}\footnotesize
    $
        \bm{P_1} = (y_3+y_4)^2 - y_1 - y_2, 
        \quad
        \bm{P_2} = y_5^3 + 2(y_3+y_4)^2 - y_1 - y_2,
    $\vspace*{-1em}
    \end{center}
    \begin{flushleft}\footnotesize \setstretch{1.3}
    $
    \bm{P_3} = 2y_3^4y_4^2 + 8y_3^3y_4^3 + 12y_3^2y_4^4 + 8y_3y_4^5 + 2y_4^6 
    + y_1^3y_5^3 + 3y_1^2y_2y_5^3 + 3y_1y_2^2y_5^3 + y_2^3y_5^3 
    + 4y_1y_3^3y_4 + 4y_2y_3^3y_4 + 8y_1y_3^2y_4^2 + 8y_2y_3^2y_4^2 
    + 4y_1y_3y_4^3 + 4y_2y_3y_4^3 
    - y_1^4 - 3y_1^3y_2 - 3y_1^2y_2^2 - y_1y_2^3 
    + 2y_1^2y_3^2 + 4y_1y_2y_3^2 + 2y_2^2y_3^2 
    - y_2y_3^2 - 2y_2y_3y_4 - y_2y_4^2,
    $\vspace*{-.5em}
    \end{flushleft}
    \begin{flushleft}\footnotesize \setstretch{1.3}
    $
   \bm{P_4} = y_3^4y_4^2 + 4y_3^3y_4^3 + 6y_3^2y_4^4 + 4y_3y_4^5 + y_4^6 
    + 2y_1y_3^3y_4 + 2y_2y_3^3y_4 
    + 4y_1y_3^2y_4^2 + 4y_2y_3^2y_4^2 
    + 2y_1y_3y_4^3 + 2y_2y_3y_4^3 
    - y_1^4 - 3y_1^3y_2 - 3y_1^2y_2^2 - y_1y_2^3 
    + y_1^2y_3^2 + 2y_1y_2y_3^2 + y_2^2y_3^2 
    - y_2y_3^2 - 2y_2y_3y_4 - y_2y_4^2.
    $
    \end{flushleft}
\end{example}
\begin{remark}\label{rem:extendIV}
This algorithm can be straightforwardly extended to select an initial value in addition to the update map of the loop. Indeed, when no initial values are specified, Step~\ref{Step:A1.3} of Algorithm~\ref{algo:generateloops} produces a sequence of polynomials 
$$(Q_1,\ldots,Q_s) \subset \Q[\xb,\yb,z]$$ 
such that $\bb \in \CL(\ab,h,\fb;\gb)$ if and only if 
$$Q_1(\ab,\bb,1)=\cdots=Q_{s}(\ab,\bb,1)=0.$$ 
Consequently, Algorithm~\ref{algo:generateloops} applies even in the absence of initial values.
\end{remark}

\subsection{Generalization to branching loops}\label{sec: branching}

In this section, we extend the results on single-path loops to \emph{branching loops} 
$\mathcal{L}(\ab,h,(F_1,\ldots,F_k))$, which are defined by a nondeterministic conditional statement with $k$ branches:

\programbox[0.8\linewidth]{
\State $(x_{1},\ldots,x_n) := (a_1,\ldots,a_n)$
\While{$h \neq 0$}
    \If{$\ast$}
        \State $(x_{1},\ldots,x_n) \gets F_{1}(x_{1},\ldots,x_n)$\\\vspace*{-.5em}
        \hspace{5mm}$\textbf{\vdots}$
    \ElsIf{$\ast$}
        \State $(x_{1},\ldots,x_n) \gets F_{i}(x_{1},\ldots,x_n)$\\\vspace*{-.5em}
        \hspace{5mm}$\textbf{\vdots}$
    \Else
        \State $(x_{1},\ldots,x_n) \gets F_{k}(x_{1},\ldots,x_n)$
    \EndIf
\EndWhile
}

\noindent Here $\ab = (a_1,\ldots,a_n) \in \C^n$, $h \in \C[\xb]$, and  
$$F_1,\ldots,F_k : \C^n \to \C^n$$  
are polynomial maps. When no guard $h$ is specified, we simply write $\mathcal{L}(\ab,1,(F_1,\ldots,F_k))$.  

The following lemma, an immediate consequence of Definition~\ref{def:invariantsetextend}, shows that invariant sets remain closed under intersections in the branching case.  

\begin{lemma}\label{prop:extension intersectioninv}
    Let $F_1,\ldots,F_k : \C^n \to \C^n$ be polynomial maps and let $X_1,\ldots,X_m \subseteq \C^n$ be algebraic varieties.  
    Denote $\Fb = (F_1,\ldots,F_k)$. Then
    $$
        S_{(\Fb,\,X_1 \cap \cdots \cap X_m)} 
        \;=\; S_{(\Fb,X_1)} \cap \cdots \cap S_{(\Fb,X_m)}.
    $$
\end{lemma}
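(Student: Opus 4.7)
The statement is a direct generalization of Lemma~\ref{prop:intersectioninv} from the single-map setting to the branching setting, and my plan is to establish it by unfolding Definition~\ref{def:invariantsetextend} and proving equality by double inclusion. The proof is essentially a bookkeeping argument, and I do not expect any genuine obstacle: the only subtle point is to keep the quantifiers over compositions of the $F_i$'s properly aligned with the intersection over the $X_j$'s, since both are universally quantified and thus commute.

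For the forward inclusion, I would start with $x \in S_{(\Fb,\, X_1 \cap \cdots \cap X_m)}$. By definition, $x \in X_1 \cap \cdots \cap X_m$ and for every $\ell \geq 1$ and every choice $i_1,\ldots,i_\ell \in [k]$, the iterate $F_{i_1,\ldots,i_\ell}(x)$ lies in $X_1 \cap \cdots \cap X_m$. Fixing an arbitrary index $j \in \{1,\ldots,m\}$, this implies $x \in X_j$ and $F_{i_1,\ldots,i_\ell}(x) \in X_j$ for all compositions, so $x \in S_{(\Fb, X_j)}$. Since $j$ was arbitrary, $x \in \bigcap_{j=1}^m S_{(\Fb, X_j)}$.

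For the reverse inclusion, assume $x \in S_{(\Fb, X_j)}$ for every $j \in \{1,\ldots,m\}$. Then $x \in X_j$ for every $j$, so $x \in X_1 \cap \cdots \cap X_m$. Moreover, for any $\ell \geq 1$ and any sequence $i_1,\ldots,i_\ell \in [k]$, we have $F_{i_1,\ldots,i_\ell}(x) \in X_j$ for each $j$, hence $F_{i_1,\ldots,i_\ell}(x) \in X_1 \cap \cdots \cap X_m$. This shows $x \in S_{(\Fb,\, X_1 \cap \cdots \cap X_m)}$, concluding the proof.

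Since both inclusions rest on the trivial equivalence ``lies in the intersection iff lies in each set,'' the argument is essentially identical to that of Lemma~\ref{prop:intersectioninv}; the only substantive change is that the quantifier over a single orbit $\{F^{(m)}(x)\}_{m \in \N}$ is replaced by the quantifier over all finite compositions $F_{i_1,\ldots,i_\ell}$. No deeper machinery (Hilbert basis, radicals, etc.) is required, which is why the authors characterize this lemma as an ``immediate consequence'' of the definition.
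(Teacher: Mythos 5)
Your proof is correct and matches the paper's treatment: the paper states this lemma without a written proof, calling it an immediate consequence of Definition~\ref{def:invariantsetextend}, and your double-inclusion unfolding of that definition is exactly the intended argument.
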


The invariant ideal $I_{\mathcal{L}(\ab,h,(F_1,\ldots,F_k))}$ of a branching loop is defined in~\cite[Definition~3.11]{bayarmagnai2024algebraic} as the set of all polynomial invariants of the loop.  
The following proposition extends Proposition~\ref{prop:invarianttest}; the proof is the same, except that~\cite[Proposition~3.15]{bayarmagnai2024algebraic} is invoked in place of~\cite[Proposition~2.7]{bayarmagnai2024algebraic}.

\begin{proposition}\label{prop:extension of invarianttest}
    Let $F_1,\ldots,F_k : \C^n \to \C^n$ be polynomial maps and let $h,g_1,\ldots,g_m \in \C[\xb]$.  
    Let $z$ be a new indeterminate, and define the variety
    $$
        X = \V(zg_1,\ldots,zg_m) \subset \C^{n+1}.
    $$  
    For each $i \leq k$, define the polynomial map
    $$
        G_{i,h}(\xb,z) = \big(F_i(\xb),\, zh(\xb)\big).
    $$  
    Then
    $$
        \gb \subseteq I_{\mathcal{L}(\ab,h,(F_1,\ldots,F_k))}
        \; \Longleftrightarrow \;        (\ab,1) \in S_{((G_{1,h},\ldots,G_{k,h}),X)}.
    $$
\end{proposition}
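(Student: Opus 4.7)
The plan is to mirror, step by step, the proof of Proposition~\ref{prop:invarianttest}, swapping the single-map invariant-set statement~\cite[Proposition~2.7]{bayarmagnai2024algebraic} for its branching analogue~\cite[Proposition~3.15]{bayarmagnai2024algebraic}, and swapping Lemma~\ref{prop:intersectioninv} for its branching version, Lemma~\ref{prop:extension intersectioninv}. The overall strategy is to reduce the case of several invariants to the case of a single invariant by decomposing $X$ into the intersection of the varieties cut out by each $zg_i$ separately, and then recombining using the intersection-closure property of invariant sets.

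More concretely, I would first introduce, for each $i \in \{1,\dots,m\}$, the auxiliary variety $X_i = \V(zg_i) \subset \C^{n+1}$, so that by construction $X = X_1 \cap \cdots \cap X_m$. Next, I would apply~\cite[Proposition~3.15]{bayarmagnai2024algebraic} to each single invariant $g_i$ and the tuple of maps $(G_{1,h},\dots,G_{k,h})$: this yields the equivalence
\[
    g_i \in I_{\mathcal{L}(\ab,h,(F_1,\ldots,F_k))}
    \;\Longleftrightarrow\;
    (\ab,1) \in S_{((G_{1,h},\ldots,G_{k,h}),X_i)}.
\]
Conjoining these equivalences over $i = 1,\dots,m$ gives that $\gb \subseteq I_{\mathcal{L}(\ab,h,(F_1,\ldots,F_k))}$ holds if and only if $(\ab,1)$ lies in $\bigcap_{i=1}^m S_{((G_{1,h},\ldots,G_{k,h}),X_i)}$.

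To close the loop, I would invoke Lemma~\ref{prop:extension intersectioninv} applied to the branching tuple $\Fb = (G_{1,h},\dots,G_{k,h})$ and the varieties $X_1,\dots,X_m$ to rewrite this intersection as $S_{((G_{1,h},\ldots,G_{k,h}),X)}$, producing the desired equivalence. The argument is essentially routine once the correct branching generalizations are in place; the only point that requires care, and thus the mild obstacle, is checking that the lifted maps $G_{i,h}$ really do send a point of $X$ to a point of $X$ only when all of the $zg_j$ vanish \emph{simultaneously} along every admissible composition $G_{i_m,h} \circ \cdots \circ G_{i_1,h}$. This is guaranteed by Definition~\ref{def:invariantsetextend}, which quantifies over all sequences $i_1,\dots,i_m \in [k]$, so the branching intersection lemma applies verbatim, and no new combinatorial argument is needed.
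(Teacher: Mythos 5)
Your proposal matches the paper's argument: the paper proves this proposition by repeating the proof of Proposition~\ref{prop:invarianttest} verbatim, invoking \cite[Proposition~3.15]{bayarmagnai2024algebraic} in place of \cite[Proposition~2.7]{bayarmagnai2024algebraic} and using the branching intersection lemma (Lemma~\ref{prop:extension intersectioninv}) exactly as you do. Your decomposition into the varieties $X_i = \V(zg_i)$ and recombination via the intersection-closure of invariant sets is the intended route, so the proof is correct and essentially identical to the paper's.
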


We now define the set of branching loops, specified by multiple polynomial maps, an initial value, and a guard condition, that satisfy a given collection of polynomial invariants.  
The following definition extends Definition~\ref{def:coefset}.

\begin{definition}\label{def: extension of ceofset}
    For each $i \leq k$ and $j \leq n$, let $\fb_{i,j} = (f_{i,j,1},\ldots,f_{i,j,n_{i,j}})$ be a sequence of polynomials in $\C[\xb]$ and let $h \in \C[\xb]$.  
    Let $\gb = (g_1,\ldots,g_m)$ be a sequence of polynomials in $\C[\xb]$, and set $\fb_i = (\fb_{i,1},\ldots,\fb_{i,n})$ for each $i \leq k$.  
    
    The set of branching loops structured by $(\fb_1,\ldots,\fb_k)$ and satisfying the invariants $\gb$ is the coefficient set $\CL(h,(\fb_1,\ldots,\fb_k);\gb)$ defined by 
\begin{align*}
     \Big\{(\ab,\bb_1,\ldots,\bb_k) \in \C^M \;\Big|\;
    \gb \subseteq I_{\mL(\ab,h,(F_{\fb_1,\bb_1},\ldots,F_{\fb_k,\bb_k}))} \Big\},
\end{align*}
    where 
    $
        \textstyle{M = n + \sum_{i=1}^k \sum_{j=1}^n n_{i,j}.}
    $
\end{definition}

Thus, $\CL(h,(\fb_1,\ldots,\fb_k);\gb)$ consists of all choices of initial values and coefficients that yield branching loops satisfying the polynomial invariants $\gb$.  

Hence, to synthesize a branching loop structured by $(\fb_1,\ldots,\fb_k)$ that satisfies $\gb$, it suffices to identify a vector in $\CL(h,(\fb_1,\ldots,\fb_k);\gb)$.


We now show that $\CL(h,(\fb_1,\ldots,\fb_k);\gb)$ is an invariant set, analogous to Proposition~\ref{prop:loopgenerator} for the single-map case. 
The proof is similar and therefore omitted.  

\begin{proposition}\label{prop: extenstion loopgenerator}
    Using the notation of Definition~\ref{def: extension of ceofset}, let $z$ and $y_{i,j,l}$ be new indeterminates for every $i \leq k$, $j \leq n$, and $l \leq n_{i,j}$.  
    For each $i \leq k$, define the polynomial map 
        $H_{i,h} : \C^{M+1} \to \C^{M+1}$
    in $\C[\xb,\yb,z]$ such that $H_{i,h}(\xb,\yb,z) $ is equal to 
    $$
       \bigg(
            \sum_{l=1}^{n_{i,1}} y_{i,1,l} f_{i,1,l}(\xb),\;\ldots,\;
            \sum_{l=1}^{n_{i,n}} y_{i,n,l} f_{i,n,l}(\xb), 
            \yb,\; zh(\xb)
        \bigg).
    $$
    Define 
        $X = \V(zg_1,\ldots,zg_m) \subset \C^{M+1}$,
    and set $\bb = (\bb_1,\ldots,\bb_k)$.  
    Then $\CL(h,(\fb_1,\ldots,\fb_k);\gb)$ is the set 
    $$
       \Big\{\, (\ab,\bb) \in \C^M \;\Big|\;
        (\ab,\bb,1) \in S_{((H_{1,h},\ldots,H_{k,h}),X)} \,\Big\}.
    $$
\end{proposition}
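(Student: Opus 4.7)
The plan is to follow exactly the structure of the proof of Proposition~\ref{prop:loopgenerator}, replacing the single-map arguments with their branching analogues (Proposition~\ref{prop:extension of invarianttest} and Lemma~\ref{prop:extension intersectioninv}), and exploiting the fact that each $H_{i,h}$ leaves the $\yb$-coordinates fixed.

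First, I would write each branch map in the block form
\[
   H_{i,h}(\xb,\yb,z) = \bigl(H_i(\xb,\yb),\,z h(\xb)\bigr),
\]
where $H_i$ keeps $\yb$ untouched. The crucial observation is that for any finite word $i_1,\ldots,i_N \in [k]$ and any fixed value $\yb = \bb = (\bb_1,\ldots,\bb_k)$, an easy induction on $N$ gives
\[
   H_{i_N,h} \circ \cdots \circ H_{i_1,h}(\xb,\bb,z)
   = \bigl(F_{\fb_{i_N},\bb_{i_N}}\circ\cdots\circ F_{\fb_{i_1},\bb_{i_1}}(\xb),\,\bb,\,z\cdot\tilde h\bigr),
\]
for the appropriate product $\tilde h$ of values of $h$ along the trajectory. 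Hence, applied at the point $(\ab,\bb,1)$, the iterates of the branching system $(H_{1,h},\ldots,H_{k,h})$ coincide, up to the frozen $\yb$-coordinate, with those of the branching system $(G_{1,h},\ldots,G_{k,h})$ from Proposition~\ref{prop:extension of invarianttest}, where $G_{i,h}(\xb,z)=(F_{\fb_i,\bb_i}(\xb),zh(\xb))$.

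Next, I would use this coincidence to transfer the characterization of Proposition~\ref{prop:extension of invarianttest}: by construction of $X$, membership of an iterate in $X$ only tests the $\xb$- and $z$-coordinates against $zg_1,\ldots,zg_m$, so the frozen $\yb=\bb$ is irrelevant to membership. Thus
\[
   (\ab,\bb,1) \in S_{((H_{1,h},\ldots,H_{k,h}),X)}
   \iff
   (\ab,1) \in S_{((G_{1,h},\ldots,G_{k,h}),\V(zg_1,\ldots,zg_m))},
\]
which, by Proposition~\ref{prop:extension of invarianttest}, is equivalent to $\gb \subseteq I_{\mL(\ab,h,(F_{\fb_1,\bb_1},\ldots,F_{\fb_k,\bb_k}))}$. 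Unfolding Definition~\ref{def: extension of ceofset} then gives the claimed equality of sets. Lemma~\ref{prop:extension intersectioninv} is invoked implicitly through Proposition~\ref{prop:extension of invarianttest}, exactly as Lemma~\ref{prop:intersectioninv} was used in Proposition~\ref{prop:invarianttest}.

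The only delicate step is the bookkeeping in the iteration identity: one must check that freezing $\yb$ really does let the $k$-fold nondeterministic dynamics on $\C^{M+1}$ factor through the $k$-fold nondeterministic dynamics on $\C^{n+1}$ parametrized by $\bb$. Once this is verified by induction on the length of the branch word $i_1,\ldots,i_N$, everything else is a direct transcription of the single-path proof, and no new algebraic-geometric ingredient is needed.
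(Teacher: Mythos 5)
Your proof is correct and follows essentially the same route the paper intends: the paper omits this proof, stating it is analogous to Proposition~\ref{prop:loopgenerator}, and your argument is exactly that adaptation, using Proposition~\ref{prop:extension of invarianttest} in place of Proposition~\ref{prop:invarianttest} together with the observation that the $\yb$-coordinates stay frozen so the iterates factor through the maps $F_{\fb_i,\bb_i}$. The only cosmetic difference is that you transfer the statement at the level of invariant sets on $\C^{n+1}$ (since $X$ only constrains $\xb$ and $z$), whereas the paper's single-path template compares invariant ideals of the extended loop; both hinge on the same bookkeeping identity you verify by induction.
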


Therefore, by computing a defining set of polynomials for the invariant set
$S_{((H_{1,h},\ldots,H_{k,h}),X)}$ and then substituting $z=1$,
we obtain defining equations for $\CL(h,(\fb_1,\ldots,\fb_k);\gb)$.
An algorithm for computing these equations in the branching (multi-map) setting
is provided by Proposition~\ref{prop: invariantsetbranch}.
The following example illustrates the procedure step by step, yielding a defining set of polynomials for
$\CL(h,(\fb_1,\ldots,\fb_k);\gb)$.

\begin{example}
    Consider branching loops with a nondeterministic conditional statement involving two branches, given by
        $$
        F_1(x_1,x_2) = (\lambda_1x_1 + \lambda_2,\, \lambda_3x_2), $$
        $$
        F_2(x_1,x_2) = (\lambda_4x_2 + \lambda_5,\, \lambda_6x_1),
    $$
    that satisfy the polynomial invariant $g = 2x_1 - x_2^2$.  
    Hence, we aim to compute all branching loops of the following form that preserve $g$:

    \programbox[0.7\linewidth]{
    \State $(x_1,x_2) := (a_1,a_2)$
    \While{$1 \neq 0$}
        \If{$\ast$}
            \State $(x_1,x_2) \xleftarrow{F_1} (\lambda_1x_1+\lambda_2,\, \lambda_3x_2)$
        \Else
            \State $(x_1,x_2) \xleftarrow{F_2} (\lambda_4x_2+\lambda_5,\, \lambda_6x_1)$
        \EndIf
    \EndWhile
    }

    \smallskip

    \noindent Let $\fb_1 = \{\{x_1,1\},\{x_2\}\}$ and $\fb_2 = \{\{x_2,1\},\{x_1\}\}$.  
    Then $\mL((a_1,a_2),1,(F_1,F_2))$ satisfies the polynomial invariant $g$ if and only if
    $$
        (a_1,a_2,\lambda_1,\ldots,\lambda_6) \in \CL(1,(\fb_1,\fb_2),g).
    $$  
    Let $X = \V(zg) \subset \C^9$, and define the polynomial maps
    $$
        H_{1,h}(\xb,\yb,z) = (y_1x_1+y_2,\, y_3x_2,\, \yb,\, z),
    $$
    $$
        H_{2,h}(\xb,\yb,z) = (y_4x_2+y_5,\, y_6x_1,\, \yb,\, z).
    $$  
    By Proposition~\ref{prop: extenstion loopgenerator},  
    $
        (a_1,a_2,\lambda_1,\ldots,\lambda_6) \in \CL(1,(\fb_1,\fb_2),g)$  if and only if 
        $(a_1,a_2,\lambda_1,\ldots,\lambda_6,1) \in S_{(H_{1,h},H_{2,h}),X}.
    $
    On input $H_{1,h}, H_{2,h}$ and $X$,  
    \InvariantSetBranch outputs $31$ polynomials
    $$
        \{Q_1(\xb,\yb,z),\ldots,Q_{31}(\xb,\yb,z)\},
    $$
    whose vanishing locus is $S_{(H_{1,h},H_{2,h}),X}$.  
    Therefore, the vanishing locus of the equations $$Q_1(\xb,\yb,1)=0,\ldots,Q_{31}(\xb,\yb,1)=0,$$
    is $\CL(1,(\fb_1,\fb_2),g)$.
    Below we present six sample polynomials among these:
\begin{center}\scalebox{0.8}{\parbox{\linewidth}{
$2x_1-x_2^2;\quad  2y_1x_1+2y_2-y_3^2x_2^2; \quad 2y_4x_2+2y_5-y_6^2x_1^2;$\\[.5em]
$2y_1^2x_1+2y_1y_2+2y_2-y_3^4x_2^2;\, 2y_1y_4x_2+2y_1y_5+2y_2-y_3^2y_6^2x_1^2$\\[.5em]
$-y_1^2y_6^2x_1^2-2y_1y_2y_6^2x_1-y_2^2y_6^2+2y_3y_4x_2+2y_5.$
}}
\end{center}\vspace*{0em}
\end{example}

\begin{remark}
Similarly to Remark~\ref{rem:extendIV}, we can extend this approach (and in particular the previous algorithm) to also synthesize guard inequations.
Indeed, using the notation of Definition~\ref{def: extension of ceofset}, let $z$, $y_{i,j,l}$ and $w_1,\ldots, w_r$ be new indeterminates for every $i \leq k$, $j \leq n$, and $l \leq n_{i,j}$.  
Let $\hb=(h_1,\ldots,h_r)$ be a sequence of polynomials in $\Q[\xb]$ given as input. We look for guard polynomial is of the form $$h(\xb,\wb)=w_1h_1(\xb)+\cdots+w_rh_r(\xb)$$ where $\wb = (w_1,\ldots, w_r)$.  For each $i\leq k$, define polynomial maps 
    $H_{i,\hb}(\xb,\yb,z,\wb)$
    \begin{center}
        \scalebox{.9}{$\displaystyle
       \bigg(
            \sum_{l=1}^{n_{i,1}} y_{i,1,l} f_{i,1,l}(\xb),\ldots,
            \sum_{l=1}^{n_{i,n}} y_{i,n,l} f_{i,n,l}(\xb), 
            \yb, zh(\xb,\wb),\wb
        \bigg).$}
    \end{center}
      Define 
        $X = \V(zg_1,\ldots,zg_m)$.
    By computing defining equations for the invariant set  $$S_{((H_{1,\hb},\ldots,H_{k,\hb}),X)}$$ and then substituting $z=1$, 
we obtain equations $$P_1(\xb,\yb,\wb)=0,\ldots, P_s(\xb,\yb,\wb)=0$$ such that $(\ab,\bb)$ is contained in $\CL(h(\xb,\cb),(\fb_1,\ldots,\fb_k);\gb)$ if and only if $P_1(\ab,\bb,\cb)=\cdots=P_s(\ab,\bb,\cb)=0$. 
\end{remark}

\subsection{Universally inductive invariants}\label{sec: general invariants}
Algorithm~\ref{algo:generateloops} relies on Gr\"obner basis computation, whose worst case complexity can be doubly exponential in the number of variables~\cite{mayr1982complexity}.  
Consequently, when the template of the update map contains many generators, Algorithm~\ref{algo:generateloops} may fail to terminate within a reasonable time.  
Moreover, the degrees of the polynomial equations computed by Algorithm~\ref{algo:generateloops} can grow exponentially.  

In this subsection, we address these issues by focusing on a special class of invariants.  
For this class, we rely only on linear algebra to generate polynomial equations characterizing loops that satisfy the given invariants, and the degrees of these equations are bounded above by the degrees of the invariants themselves.

\subsubsection{General case}
In this subsection, we consider polynomial invariants of the form $g(\xb)-g(\ab)$, which hold for every initial value $\ab \in \C^n$.  
Equivalently, $g(\xb)-g(\ab)$ is an invariant of $\mL(\ab,h,(F_1,\ldots,F_k))$ for all $\ab \in \C^n$.
\begin{definition}\label{def:general}
    Let $F_1,\ldots,F_k : \C^n \to \C^n$ be polynomial maps, and let $h \in \C[\xb]$.  
    A polynomial $g(\xb)$ is called a \emph{universally inductive invariant} of $\mL(h,(F_1,\ldots,F_k))$ if 
    $g(\xb)-g(\ab)$ is an invariant of $\mL(\ab,h,(F_1,\ldots,F_k))$ for every $\ab \in \C^n$.  
    The set of all universally inductive invariants of $\mL(h,(F_1,\ldots,F_k))$ is denoted by
        $UI_{\mL(h,(F_1,\ldots,F_k))}$.
\end{definition}

We now define the coefficient set that characterizes all branching loops structured by $(\fb_1,\ldots,\fb_k)$ which satisfy a prescribed collection of universally inductive invariants.

\begin{definition}
    Using the notation of Definition~\ref{def: extension of ceofset},  
    the set of all branching loops structured by $(\fb_1,\ldots,\fb_k)$ that satisfy the universally inductive invariants $\gb$ is defined by the coefficient set  $U\CL(h,(\fb_1,\ldots,\fb_k);\gb)$, which is
    $$ \Big\{ (\bb_1,\ldots,\bb_k) \in \C^M 
        \;\Big|\; \gb \subseteq UI_{\mL(h,(F_{\fb_1,\bb_1},\ldots,F_{\fb_k,\bb_k}))} \Big\},
    $$
    where
    $
        \textstyle{M = \sum_{i=1}^k \sum_{j=1}^n n_{i,j}.}
    $
\end{definition}

The following proposition, stated in~\cite[Corollary~4.4]{bayarmagnai2024algebraic}, provides a necessary and sufficient condition for a polynomial to be a universally inductive invariant.  
For completeness, we include a brief proof.
\begin{proposition}\label{prop: general}
    Let $F_1,\ldots,F_k : \C^n \to \C^n$ be polynomial maps, and let $h(\xb) \in \C[\xb] \setminus \{0\}$.  
    Then a polynomial $g(\xb)$ is a universally inductive invariant if and only if
    $$
        g(F_i(\xb)) = g(\xb) \quad \text{for every } i \leq k.
    $$
\end{proposition}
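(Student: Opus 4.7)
The plan is to prove the two implications separately. The backward direction is essentially a propagation argument, while the forward direction relies on evaluating the invariance condition after just one iteration and using Zariski-density of the guard's non-vanishing locus to promote pointwise equalities to a polynomial identity.

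For the backward direction $(\Leftarrow)$, assume $g\circ F_i = g$ as polynomials in $\C[\xb]$ for every $i\le k$. I would prove by induction on $m\ge 0$ that for any branch sequence $i_1,\ldots,i_m\in [k]$ and any $\ab\in\C^n$ one has $g(F_{i_1,\ldots,i_m}(\ab))=g(\ab)$: the base case is trivial, and the inductive step applies the hypothesis to $F_{i_m}\circ F_{i_{m-1},\ldots,i_1}$. Consequently $(g(\xb)-g(\ab))$ evaluated along the trajectory of $\mL(\ab,h,(F_1,\ldots,F_k))$ is identically zero regardless of the branch choices, so it belongs to the invariant ideal for every $\ab$. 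Thus $g\in UI_{\mL(h,(F_1,\ldots,F_k))}$.

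For the forward direction $(\Rightarrow)$, fix $i\le k$ and consider the polynomial $\Delta_i(\xb):=g(F_i(\xb))-g(\xb)\in\C[\xb]$. The goal is to show $\Delta_i\equiv 0$. Pick any $\ab\in\C^n$ with $h(\ab)\ne 0$; then the branching loop performs at least one iteration along branch $i$, so the definition of invariant applied to $g(\xb)-g(\ab)$ forces $g(F_i(\ab))-g(\ab)=0$, i.e., $\Delta_i(\ab)=0$. Hence $\Delta_i$ vanishes on the set $U=\{\ab\in\C^n\mid h(\ab)\ne 0\}$. Since $h\ne 0$ as a polynomial, $U$ is a nonempty Zariski-open and therefore Zariski-dense subset of $\C^n$, so $\Delta_i$ is the zero polynomial, yielding $g\circ F_i = g$. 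This holds for every $i\le k$.

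The only delicate step is the forward direction: one must reconcile the fact that the invariant condition only constrains the \emph{reachable} states of the loop (those with $h\neq 0$ along the orbit) with the desired \emph{identity} $g\circ F_i=g$ in $\C[\xb]$. The hypothesis $h\not\equiv 0$ is precisely what is needed so that the subset of admissible initial values is Zariski-dense, allowing the pointwise equalities to extend to a polynomial identity. No obstruction arises for the converse since the assumed identity propagates deterministically through any finite composition of the $F_i$'s, independently of the guard.
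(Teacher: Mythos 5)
Your proof is correct and follows essentially the same route as the paper: the backward direction propagates the identity $g\circ F_i=g$ through arbitrary finite compositions, and the forward direction extracts $g(F_i(\ab))=g(\ab)$ at every $\ab$ with $h(\ab)\neq 0$ and then extends by density of that set. The only cosmetic difference is that you invoke Zariski density of $\{h\neq 0\}$ to conclude $g\circ F_i-g\equiv 0$ directly, whereas the paper approximates points of $\V(h)$ by sequences outside $\V(h)$ and uses continuity; both are instances of the same density argument.
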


\begin{proof}
    ($\Rightarrow$) Suppose $g$ is a universally inductive invariant.  
    Let $\ab \notin \V(h)$. By the definition of invariant polynomials,
    $$
        g(F_i(\ab)) - g(\ab) = 0
    $$
    for every $i \leq k$. Hence $g(F_i(\ab)) = g(\ab)$ for all $\ab \notin \V(h)$.  
    Now let $\ab \in \V(h)$. Since $h(\xb) \neq 0$, there exists a sequence $\{\ab_n\}$ with $\ab_n \notin \V(h)$ such that $\lim_{n \to \infty} \ab_n = \ab$.  
    Because $g(F_i(\ab_n)) = g(\ab_n)$ and both $g$ and $F_i$ are continuous, we conclude that $g(F_i(\ab)) = g(\ab)$ for all $\ab \in \V(h)$.  
    Thus, $g(F_i(\ab)) = g(\ab)$ for every $\ab \in \C^n$, and so $g(F_i(\xb)) = g(\xb)$ as polynomials, for all $i \leq k$.  

    \medskip
    ($\Leftarrow$) Conversely, assume $g(F_i(\xb)) = g(\xb)$ for every $i \leq k$.  
    Then for any $i_1,\ldots,i_m \in [k]$ and all $\ab \in \C^n$, we obtain
    $$
        g(F_{i_m}(\cdots F_{i_1}(\ab))) - g(\ab) = 0.
    $$
    Hence $g(\xb) - g(\ab) \in I_{\mL(\ab,h,(F_1,\ldots,F_k))}$, and $g$ is a universally inductive invariant.\qed
\end{proof}
An invariant is called \emph{inductive} if, once it holds after a single iteration, it continues to hold for all subsequent iterations.  
By Proposition~\ref{prop: general}, we immediately obtain:

\begin{corollary}\label{cor:univ-inductive}
    Every universally inductive invariant is inductive.
\end{corollary}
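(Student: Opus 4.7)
The plan is to derive Corollary~\ref{cor:univ-inductive} as a direct consequence of the polynomial identity established in Proposition~\ref{prop: general}. First I would unpack the informal definition of \emph{inductive}: a polynomial $g$ is an inductive invariant of $\mL(h,(F_1,\ldots,F_k))$ if, whenever $g$ vanishes at some reachable state $\xb_m$, it also vanishes at every successor state obtained by applying any branch $F_i$. Formally this amounts to the implication $g(\xb)=0 \Rightarrow g(F_i(\xb))=0$ for every $i\leq k$.

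Next I would apply Proposition~\ref{prop: general}: since $g$ is universally inductive, we have the polynomial equalities $g(F_i(\xb)) = g(\xb)$ in $\C[\xb]$ for every $i\leq k$. This identity immediately yields the required implication, because if $g(\xb_m)=0$ at some state, then $g(F_i(\xb_m))=g(\xb_m)=0$ for every branch $i$.

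Finally, I would close the argument by a straightforward induction on the number of iterations. Given any sequence of branch choices $i_1,i_2,\ldots$, the identity propagates along successive compositions $F_{i_m}(\cdots F_{i_1}(\xb_0)\cdots)$, so once $g$ holds at iteration $m$, it continues to hold at iteration $m+1$ and, by induction, at every later iteration, regardless of the nondeterministic choices. This establishes that $g$ is inductive in the sense of the definition.

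The proof presents no real obstacle; the only subtle point is to make precise the link between the intuitive wording (\emph{``once it holds after a single iteration''}) and the formal condition extracted from Proposition~\ref{prop: general}, and to note that the polynomial identity $g\circ F_i = g$ is strictly stronger than inductivity, which is why the converse of Corollary~\ref{cor:univ-inductive} need not hold in general.
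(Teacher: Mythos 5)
Your proposal is correct and follows exactly the paper's route: the paper derives the corollary immediately from Proposition~\ref{prop: general}, and you simply spell out the routine details (the identity $g\circ F_i = g$ implies $g(\xb)=0 \Rightarrow g(F_i(\xb))=0$, propagated by induction over iterations). Nothing is missing; your added remark that the identity is strictly stronger than inductivity is a fine clarification but not needed for the statement.
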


Algorithm~\ref{algo:universalgeneralcase} takes as input a sequence $\gb$ of polynomials in $\Q[\xb]$ together with the structure $(\fb_1,\ldots,\fb_k)$ of polynomial maps of a loop, and outputs polynomials defining $U\CL(h,(\fb_1,\ldots,\fb_k);\gb)$.  
The auxiliary procedure \texttt{coef} takes as input a sequence of polynomials in $\Q[\xb,\yb]$ and returns the coefficients of the monomials in the variables $\xb$.  
The correctness of Algorithm~\ref{algo:universalgeneralcase} follows immediately from Proposition~\ref{prop: general}.

\begin{algorithm}[H]
\caption{\ComputeLoopsUniversal}\label{algo:universalgeneralcase}
\begin{algorithmic}[1]
\Require A sequence $\gb=(g_1,\ldots,g_m)$ of  polynomials,  and polynomials $f_{i,j,l} \in \Q[\xb]$ for all $i \leq k$, $j \leq n$, and $l \leq n_{i,j}$.
\Ensure A set of polynomials $C_1,\ldots, C_t$ such that $U\CL(h,(\fb_1,\ldots,\fb_k);\gb)$ is $\V(C_1,\ldots, C_t)$\vspace*{.5em}
\For{$i \in [k]$}
    \State $\!\!\!\!\!\! F_i \gets \bigg(
        \displaystyle\sum_{l=1}^{n_{i,1}} y_{i,1,l} f_{i,1,l}(\xb),\;
        \ldots,\;
        \displaystyle\sum_{l=1}^{n_{i,n}} y_{i,n,l} f_{i,n,l}(\xb)
    \bigg)$
\EndFor
\State\label{Step: algo3 4} $\{C_1,\ldots, C_t\}\gets \{\coef(g_i - g_i\circ F_j) \mid i \in [m],\, j \in [k]\}$ 
\State \Return $\{C_1,\ldots, C_t\}$
\end{algorithmic}
\end{algorithm}

As a concrete application, we demonstrate how our method can be used to synthesize loops that generate Markov triples.

\begin{example}[Markov triples~\cite{cassels1957introduction}] In this example, we synthesize a loop of the following form that generates a solution to the Markov equation
$$x_1^2+x_2^2+x_3^2-3x_1x_2x_3=0.$$
 \programbox[0.9\linewidth]{
    \State $(x_1,x_2,x_3) := (1,1,2)$
    \While{$1 \neq 0$}
        \If{$\ast$}
              \State $\begin{pmatrix}
    x_1 \\
    x_2 \\
    x_3
    \end{pmatrix}
    \xleftarrow{F_1}
    \begin{pmatrix}
    \lambda_1x_1+\lambda_2x_2 \\
    \lambda_3x_1x_2+\lambda_4x_3+\lambda_5 x_1^2 \\
    \lambda_6x_2+\lambda_7x_3
    \end{pmatrix}$
        \Else
        \State $\begin{pmatrix}
    x_1 \\
    x_2 \\
    x_3
    \end{pmatrix}
    \xleftarrow{F_2}
    \begin{pmatrix}
    \lambda_8x_1+\lambda_9x_2\\
    \lambda_{10}x_2x_3+\lambda_{11}x_1+\lambda_{12} x_2^2 \\
    \lambda_{13}x_2+\lambda_{14}x_3
    \end{pmatrix}$
        \EndIf
    \EndWhile
    }    
\end{example}
Let $g=x_1^2+x_2^2+x_3^2-3x_1x_2x_3$. We therefore consider branching loops $\mL((1,1,2),1,(F_1,F_2))$  that satisfy the universal polynomial invariant $g$ with structure 
{\scriptsize $$F_1(\xb)=(\lambda_1x_1+\lambda_2x_2,3x_1x_2+\lambda_4x_3+\lambda_5 x_1^2,\lambda_6x_2+\lambda_7x_3),\vspace*{-1.5em}$$
$$F_2(\xb)=(\lambda_8x_1+\lambda_9x_2,\lambda_{10}x_2x_3+\lambda_{11}x_1+\lambda_{12} x_2^2,\lambda_{13}x_2+\lambda_{14}x_3)$$}
 By Proposition~\ref{prop: general}, we have $$g(\xb)=g(F_1(\xb)), \quad g(\xb)=g(F_2(\xb)).$$
Hence, the coefficients of the monomials in the variables $\xb$ appearing in the polynomial pairs $(g(\xb), g(F_1(\xb)))$ and $(g(\xb), g(F_2(\xb)))$ must be equal. Hence, we obtain the following system of 32 polynomial equations:
  \[
        \begin{cases}
        3\lambda_1\lambda_3\lambda_6+3\lambda_2\lambda_5\lambda_6-\lambda_3^2= 0,\\
        3\lambda_8\lambda_{10}\lambda_{13}+3\lambda_{8}\lambda_{12}\lambda_{14}=0\\
            \lambda_5^2 = 0,\\
            \ldots\\  
            \lambda_{14}^2-1 = 0,\\
            
            \lambda_1^2 - 1 = 0.
        \end{cases}
    \]
Using the \texttt{Z3} solver, we obtain the following solution to the system above:
$$\lambda_2=\lambda_5=\lambda_7=\lambda_8=\lambda_{12}=\lambda_{13}=0, \quad \lambda_3=\lambda_{10}=3$$
$$\lambda_1=\lambda_4=\lambda_6=\lambda_9=\lambda_{11}=\lambda_{14}=-1$$
\subsubsection{Affine case}

We now restrict attention to affine universally inductive invariants.  
We show that $U\CL(h,(\fb_1,\ldots,\fb_k);\gb)$ is an affine space whenever $\gb$ consists of affine polynomials.  
Furthermore, we present an algorithm for computing an affine basis of $U\CL(h,(\fb_1,\ldots,\fb_k);\gb)$.  

\begin{proposition}\label{prop:affine space}
    Let $\gb = (g_1,\ldots,g_m)$ be affine polynomials.  
    Then the coefficient set $U\CL(h,(\fb_1,\ldots,\fb_k);\gb)$ is an affine space.
\end{proposition}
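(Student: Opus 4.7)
The plan is to invoke Proposition~\ref{prop: general} to reduce universal inductive invariance to the polynomial identities $g_i(F_{\fb_j,\bb_j}(\xb)) = g_i(\xb)$ in $\C[\xb]$, for every $i\leq m$ and $j\leq k$, and then to observe that when each $g_i$ is affine these identities become affine-linear in the coefficient vector $(\bb_1,\ldots,\bb_k)$. Since the defining conditions of $U\CL(h,(\fb_1,\ldots,\fb_k);\gb)$ only involve the coefficient variables through the maps $F_{\fb_j,\bb_j}$, it is enough to show that each such identity cuts out an affine subspace of $\C^M$.

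First, I fix $i$ and $j$ and write $g_i(\xb) = c_{i,0} + \sum_{l=1}^n c_{i,l}\,x_l$ for some scalars $c_{i,l}\in\C$. Substituting the $l$-th component $\sum_{p=1}^{n_{j,l}} y_{j,l,p}\, f_{j,l,p}(\xb)$ of $F_{\fb_j,\bb_j}(\xb)$ gives
\[
    g_i(F_{\fb_j,\bb_j}(\xb)) \;=\; c_{i,0} \;+\; \sum_{l=1}^n c_{i,l} \sum_{p=1}^{n_{j,l}} y_{j,l,p}\, f_{j,l,p}(\xb),
\]
which is a polynomial in $\xb$ whose coefficient with respect to any monomial in $\xb$ is an affine-linear expression in the entries of $\bb_j = (y_{j,l,p})_{l,p}$. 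Consequently, the identity $g_i(F_{\fb_j,\bb_j}(\xb)) - g_i(\xb) = 0$ in $\C[\xb]$ amounts to setting each $\xb$-monomial coefficient to zero, producing finitely many affine-linear equations in $\bb_j$.

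Finally, the conditions for different $j$ involve disjoint blocks of coefficient variables, so ranging over all $i\leq m$ and $j\leq k$ yields a single finite system of affine-linear equations in $(\bb_1,\ldots,\bb_k)\in\C^M$. By Proposition~\ref{prop: general}, its solution set is exactly $U\CL(h,(\fb_1,\ldots,\fb_k);\gb)$, and the solution set of a finite affine-linear system is an affine subspace of $\C^M$ (possibly empty), proving the claim. There is no real obstacle here; the only point worth flagging is that each identity $g_i(F_{\fb_j,\bb_j}(\xb)) = g_i(\xb)$ should be read as one linear equation per $\xb$-monomial rather than as a single polynomial equation, which is precisely what makes the system affine-linear instead of genuinely polynomial, and which is what will allow Algorithm~\ref{algo:generalinvariants} to compute an affine basis of $U\CL$ by linear algebra alone.
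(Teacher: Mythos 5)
Your proposal is correct and follows essentially the same route as the paper: invoke Proposition~\ref{prop: general} to reduce universal inductive invariance to the identities $g_i\circ F_{\fb_j,\bb_j}=g_i$, compare coefficients of the $\xb$-monomials, and observe that affineness of the $g_i$ makes each resulting equation affine-linear in the coefficient variables, so the solution set is an affine space. Your version merely spells out the substitution more explicitly (and notes the possibly-empty case), which adds detail but no new idea.
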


\begin{proof}
    Let $(\bb_1,\ldots,\bb_k) \in U\CL(h,(\fb_1,\ldots,\fb_k);\gb)$. Then, 
    by Proposition~\ref{prop: general}, the polynomials $g_1,\ldots,g_m$ are universally inductive invariants of 
    $\mL(h,(F_{\fb_1,\bb_1},\ldots,F_{\fb_k,\bb_k}))$ if and only if
    $$
        g_i(F_j(\xb)) = g_i(\xb) \quad \text{for every } i \leq m, \; j \leq k.
    $$  
    This condition requires that the coefficients of the monomials in $g_i(F_j(\xb))$ coincide with those in $g_i(\xb)$.  
    Since $g_1,\ldots,g_m$ are linear, these coefficients are themselves linear polynomials in 
    $\C[\bb_1,\ldots,\bb_k]$.  
     Consequently, \\ $U\CL(h,(\fb_1,\ldots,\fb_k);\gb)$ is the solution set of a system of linear equations, 
    and therefore an affine space. \qed
\end{proof}
Proposition~\ref{prop:affine space} shows that, in the affine case, the synthesis problem reduces to a problem of linear algebra.

\medskip
Let $\gb$ be a sequence of affine polynomials.  
The following algorithm computes a vector $v$ and a basis of a vector space $V$ such that  
$U\CL(h,(\fb_1,\ldots,\fb_k);\gb)$ is the affine space $v+V$.  

We now describe the auxiliary procedures used in Algorithm~\ref{algo:generalinvariants}:

\begin{itemize}
    \item \sol takes as input a sequence of affine polynomials $(l_1(\yb),\ldots,l_r(\yb))$ in $\Q[\yb]$ and returns a solution to the linear system  
    $$
        l_1(\yb) = 0,\;\ldots,\; l_r(\yb) = 0.
    $$  

    \item \solveHom takes as input a sequence of affine polynomials $(l_1(\yb),\ldots,l_r(\yb))$ in $\Q[\yb]$ and outputs a basis for the solution space of the corresponding homogeneous system, namely  
    $$
        l_1(\yb) = 0,\;\ldots,\; l_r(\yb) = 0.
    $$  
\end{itemize}

The correctness of Algorithm~\ref{algo:generalinvariants} follows immediately from Proposition~\ref{prop:affine space}.

\begin{algorithm}[H]
\caption{\ComputeLoopsLinearUniversal}\label{algo:generalinvariants}
\begin{algorithmic}[1]
\Require A sequence $\gb=(g_1,\ldots,g_m)$ of affine polynomials,  and polynomials $f_{i,j,l} \in \Q[\xb]$ for all $i \leq k$, $j \leq n$, and $l \leq n_{i,j}$.
\Ensure A vector $v \in \Q^M$ and a sequence of vectors $B$ such that 
        $U\CL(h,(\fb_1,\ldots,\fb_k);\gb) = v+V$, 
        where $V$ is the vector space generated by $B$.\vspace*{.5em}
\State $C\gets \ComputeLoopsUniversal(\gb,\fb_1,\ldots, \fb_k)$
\State\label{Step: algo3 5}$v \gets \sol(C)$
\State\label{Step: algo3 6}$B \gets \solveHom(C)$
\State \Return $\{v, B\}$
\end{algorithmic}
\end{algorithm}

\begin{example}
    Consider branching loops with a nondeterministic conditional statement involving two branches, defined by
    \[
        F_1(x_1,x_2) = (\lambda_1x_1^2 + \lambda_2x_1 + \lambda_3x_2,\; 
                        \lambda_4x_1^2 + \lambda_5x_1 + \lambda_6x_2),
    \]
    \[
        F_2(x_1,x_2) = (\lambda_7x_1 + \lambda_8x_2,\;
                        \lambda_9x_1 + \lambda_{10}x_2),
    \]
    such that $g = x_1 - x_2 + 1$ is a universally inductive invariant of $\mL(1,(F_1,F_2))$.  
    By Proposition~\ref{prop: general}, this requires
    \[
        g(\xb) - g(F_1(\xb)) = 0, 
        \qquad 
        g(\xb) - g(F_2(\xb)) = 0.
    \]
    Collecting the coefficients of the monomials in $\xb$ yields the linear system:
    \[
        \begin{cases}
            \lambda_4 - \lambda_1 = 0,\\
            \lambda_5 - \lambda_2 + 1 = 0,\\
            \lambda_6 - \lambda_3 - 1 = 0,\\
            \lambda_9 - \lambda_7 + 1 = 0,\\
            \lambda_{10} - \lambda_8 - 1 = 0.
        \end{cases}
    \]
    A particular solution is  
        $v = (0,1,-1,0,0,0,1,-1,0,0)$.
    Solving the associated homogeneous system yields a sequence $B$ of basis vectors:
    \[
        (1,0,0,1,0,0,0,0,0,0),\quad 
        (0,1,0,0,1,0,0,0,0,0),
    \]
    \[
        (0,0,1,0,0,1,0,0,0,0),\quad 
        (0,0,0,0,0,0,1,0,1,0),
    \]
    \[
        (0,0,0,0,0,0,0,1,0,1).
    \]
    Therefore, $g$ is a universally inductive invariant of the loop $\mL(1,(F_1,F_2))$ if and only if 
        $(\lambda_1,\ldots,\lambda_{10}) \in v + \Span(B)$.
\end{example}

\section{Solving Polynomial Systems}\label{sec:polysolve}

In the previous section, we saw that Algorithms~\ref{algo:generateloops} and~\ref{algo:universalgeneralcase} produce systems of multivariate polynomials whose solutions correspond precisely to loops with the prescribed structure and invariants.  
Thus, the problem of loop synthesis reduces to solving polynomial systems.

Since our goal is to obtain loops that can be represented finitely and exactly on a computer, we focus on \emph{rational solutions}, i.e., solutions with coefficients in $\Q$.  
Unlike the situation over $\C$ (where Hilbert's Nullstellensatz applies~\cite{Hilbert1893}, see also~\cite[Chap.~4, \S1]{cox2013ideals}) or over $\R$ (where the Tarski–Seidenberg theorem applies~\cite{Tar1951,Sei1954}, see e.g.~\cite{bpr2006}), deciding whether a polynomial equation has a rational solution is a long-standing open problem in number theory~\cite{Shla2011}.  
For integer solutions, this corresponds to Hilbert's Tenth Problem, which is known to be undecidable~\cite{Shla2011}.

Given these difficulties, we outline three main strategies for addressing the problem, although none is fully satisfactory or complete.

\subsection{Exploiting the structure of polynomial systems}

Although no general algorithm is known for computing (or even deciding the existence of) rational solutions to multivariate polynomial systems by exact methods, many practical instances can still be solved by leveraging structural properties.  
For example, consider the polynomial system obtained at the end of Example~\ref{exa:algo2}.  
While this case may seem simple or narrowly tailored, all benchmarks reported in Section~\ref{sec:implementation} can be handled in a similar way.  
More generally, decompositions of varieties with additional structure can sometimes be reduced to combinatorial problems; see, e.g.,~\cite{liwski2025solvable,liwski2025efficient}.

\begin{example}  
The variety $\V(P_1,\ldots,P_4)$ can be decomposed into finitely many irreducible components, which in turn split the system into simpler subsystems. This decomposition can be carried out using classical methods from computer algebra~\cite[Chap.~4, \S6]{cox2013ideals}; here we use the command \texttt{minimalPrimes} from Macaulay2~\cite{M2}.  
We obtain the following five irreducible components:

\medskip
\begin{itemize}
    \item[] $\mathbf{V}_1 = \V(y_5,\; y_3+y_4,\; y_1+y_2)$ 
    \item[] $\mathbf{V}_2 = \V(y_5+1,\; y_3+y_4-1,\; y_1+y_2-1)$
    \item[] $\mathbf{V}_3 = \V(y_5+1,\; y_3+y_4+1,\; y_1+y_2-1)$
    \item[] $\mathbf{V}_4 = \V(y_3+y_4-1,\; y_1+y_2-1,\; y_5^2-y_5+1)$
    \item[] $\mathbf{V}_5 = \V(y_3+y_4+1,\; y_1+y_2-1,\; y_5^2-y_5+1)$
\end{itemize}

\medskip

\noindent Thus, $\mL((1,1,-1),1,F)$ satisfies the polynomial invariants $\{g_1,g_2\}$ if and only if $(\lambda_1,\ldots,\lambda_5)$ lies in one of the above components.  
The simple structure of the defining polynomials makes the problem fully solvable.  

For the first three components, which involve only linear polynomials, the problem reduces to standard linear algebra, where efficient methods are available (see e.g., \cite{CS2011}). More specifically, for parameters $\mu_1,\mu_2 \in \Q$, we obtain the following loop maps:
\begin{center}
\scalebox{.9}{\parbox{1.1\linewidth}{
\begin{itemize}
    \item[] $\mathbf{V}_1$: $F_1(\xb) = \big(\mu_1(x_1^3 - x_2^2),\; \mu_2(x_1 - x_2^2),\; 0\big)$\vspace*{.5em}
    \item[] $\mathbf{V}_2$: $F_2(\xb) = \big(\mu_1x_1^3 + (1-\mu_1)x_2^2,\; \mu_2x_1 + (1-\mu_2)x_2^2,\; -1\big)$\vspace*{.5em}
    \item[] $\mathbf{V}_3$: $F_3(\xb) = \big(\mu_1x_1^3 - (1+\mu_1)x_2^2,\; \mu_2x_1 + (1-\mu_2)x_2^2,\; -1\big)$\vspace*{-.5em}
\end{itemize}
}}
\end{center}
\smallskip\noindent For the last two components, no rational solution exists, since the polynomial $y_5^2-y_5+1$ has no rational roots.
\end{example}

Another important case arises when the polynomial system computed by Algorithm~\ref{algo:generateloops} and Algorithm~\ref{algo:universalgeneralcase} (or one of its subsystems) has only finitely many solutions.  
Geometrically, this means that the associated variety (or one of its irreducible components) consists of finitely many points, i.e., it has \emph{dimension zero}.  
In this situation, such systems can be reduced (see, e.g., \cite[\S3]{DL2008}, \cite{Rou1999}) to polynomial systems with rational coefficients of the form
$$Q_1(x_1) = 0, \qquad x_i = Q_i(x_1) \;\;\text{for all } i \geq 2.$$

Finding all rational solutions then reduces to solving a univariate polynomial equation, which can be done algorithmically: either through efficient factoring~\cite[Theorem~15.21]{von2013modern} to extract the linear factors in $\Q[x_1]$, or via modular arithmetic-based methods~\cite{Lo1983}.  

Thus, whenever the system has finitely many solutions, all rational ones can be effectively computed.  
Practical implementations exist, such as \textsf{AlgebraicSolving.jl}\footnote{\url{https://algebraic-solving.github.io}}, which is based on the \texttt{msolve} library~\cite{BES2021}.  
Moreover, many benchmarks in the next section fall into this category, which we believe is common when the degree and support of $F$ are comparable.

\subsection{Numerical methods}

Numerical approaches to solving polynomial systems, such as \textsf{HomotopyContinuation.jl}~\cite{HomotopyContinuation}, provide powerful tools for approximating solutions.  
These methods rely on numerical algebraic geometry, in particular on homotopy continuation.  
Unlike symbolic techniques based on Gr\"obner bases or resultants, numerical methods can efficiently handle large and complex systems.  
When applied to the systems generated by Algorithm~\ref{algo:generateloops} and Algorithm~\ref{algo:universalgeneralcase}, they yield numerical solutions that can then be checked against nearby integers or rationals to identify valid exact solutions.
 

\begin{example}\label{ex4}
    Consider the polynomial system $\{P_1,P_2,P_3$, $P_4\}$ from Example~\ref{exa:algo2}.  
    Since it involves five variables and only four equations, it has either no solutions or infinitely many.  
    A common strategy is to augment the system with a random linear form with integer coefficients.  
    Using \textsf{HomotopyContinuation.jl}, we obtain 15 numerical real solutions to the augmented system, 12 of which correspond to valid integer solutions.
\end{example}

Despite their efficiency, numerical methods have limitations.  
They provide approximate solutions that may lack exact algebraic structure and require additional validation.  
They can also struggle with singular solutions, and homotopy continuation methods may occasionally miss solutions altogether.

\vspace*{-1em}
\subsection{Satisfiability Modulo Theories (SMT) solvers}

The SMT solvers decide the satisfiability of logical formulas that combine Boolean logic with background theories such as integer or real arithmetic.  
In particular, the existence of integer or rational solutions to a polynomial system can be expressed as
{\small \[\exists x_1 \ldots \exists x_n \; (f_1(x_1, \ldots, x_n) = 0) \wedge \ldots \wedge (f_m(x_1, \ldots, x_n) = 0).\]}  
SMT solvers benefit from automated reasoning, efficient decision procedures, and the ability to integrate multiple logical theories.  
However, they face difficulties with high-degree polynomials and, due to undecidability, cannot guarantee general solutions over the integers.  
Nevertheless, as shown in Section~\ref{sec:implementation}, the SMT solver \texttt{Z3} successfully finds integer solutions for most of the polynomial systems generated by Algorithms~\ref{algo:generateloops} and Algorithm~\ref{algo:universalgeneralcase}.

\section{Implementation and Experiments}\label{sec:implementation}

\subsection{Setup}
We implemented Algorithm~\ref{algo:generateloops}, Algorithm~\ref{algo:universalgeneralcase} and Algorithm~\ref{algo:generalinvariants} in Macaulay2~\cite{M2}.  
The prototype builds on~\cite{bayarmagnaiIssac}, and the source code is publicly available at:  

\begin{center}\scriptsize\vspace*{-0em}
\href{https://github.com/Erdenebayar2/Synthesizing_Loops.git}{\texttt{https://github.com/Erdenebayar2/Synthesizing\_Loops.git}}\vspace*{-0.5em}
\end{center}
All experiments were conducted on a laptop equipped with a 4.8~GHz Intel i7 processor, 16~GB RAM, and a 25~MB L3 cache.  
After generating polynomial systems using Algorithms~\ref{algo:generateloops} and~\ref{algo:universalgeneralcase}, we used the SMT solver \texttt{Z3}~\cite{z3solver} to search for common nonzero integer solutions.

We evaluated our approach on benchmarks from~\cite{ISSAC2023Laura,hitarth_et_al:LIPIcs.STACS.2024.41,humenberger2022LoopSynthesis}.  
The benchmark suite is publicly available at:  
\begin{center}\scriptsize\vspace*{-1em}
\href{https://github.com/Erdenebayar2/Synthesizing_Loops/tree/master/software/loops}%
{\texttt{https://github.com/Erdenebayar2/Synthesizing\_Loops/software/loops}}
\end{center}\vspace*{-0em}
\newcommand{\Idexp}{Id}

\textbf{Notations.}\; In the following tables, ``$n$'' denotes the number of program variables, ``$m$'' the number of polynomial invariants in $\gb$, and ``$d$'' their maximal degree.  
``$D$'' is the maximal degree among the polynomials in $\fb_1,\ldots,\fb_n$.  
Let $\fb_1 = (f_{1,1}, \ldots, f_{1,l_1}), \ldots, \fb_n = (f_{n,1}, \ldots, f_{n,l_n})$ be sequences of polynomials in $\C[\xb]$, and set $l = l_1 + \cdots + l_n$.  
In other words, ``$l$'' denotes the total number of generators for the update maps of the loops.

\newcommand{\Fexp}{\textbf{F}}
\newcommand{\NIexp}{\textbf{NI}}
\newcommand{\TLexp}{\textbf{TL}}

\subsection{Experimental Results}

Table~\ref{table1} reports the execution times for Algorithm~\ref{algo:generateloops} and for \texttt{Z3}, both of which are run on the polynomial systems generated by Algorithm~\ref{algo:generateloops}.  
All timings are given in seconds, with a timeout (indicated by \TLexp) of 300 seconds.  
The label ``\Fexp'' indicates that \texttt{Z3} failed to find an integer solution, while ``\NIexp'' means that no input was passed to \texttt{Z3} because Algorithm~\ref{algo:generateloops} itself timed out.  

In most cases, once Algorithm~\ref{algo:generateloops} 
terminates, \texttt{Z3} finds a common nonzero integer solution within 0.3 seconds.  
Thus, searching for integer solutions is not a bottleneck.  
We also observe that providing additional polynomial invariants generally accelerates the termination of Algorithm~\ref{algo:generateloops}.  
Overall, these results demonstrate that the main computational challenge lies in generating the polynomial systems, rather than in solving them with \texttt{Z3}.


\begin{table}[H]
    \centering\hspace*{-0.2cm}
    \scalebox{0.47}{
    \begin{tabular}{|*{4}{c|}*{5}{|c|c|}}
    \hline
        \multicolumn{4}{|c||}{Polynomial map} & \multicolumn{2}{|c||}{D=1, $l=3$} & \multicolumn{2}{|c||}{D=1, $l=4$} &  \multicolumn{2}{|c||}{D=1, $l=5$} & \multicolumn{2}{|c||}{D=2, $l=2$} & \multicolumn{2}{|c|}{D=2, $l=3$} \\ \hline
    Benchmark &$n$ & $m$& $d$ &Alg.~\ref{algo:generateloops} & \texttt{Z3} & Alg.~\ref{algo:generateloops} & \texttt{Z3} & Alg.~\ref{algo:generateloops} & \texttt{Z3} & Alg.~\ref{algo:generateloops} & \texttt{Z3} & Alg.~\ref{algo:generateloops} & \texttt{Z3}\\ \hline      
        Ex1.2~\cite{ISSAC2023Laura} & 2&1 & 4 & {0.02} & \Fexp & {31.1} & \Fexp & \TLexp & \NIexp & {0.01} & 0.06 &\TLexp&\NIexp \\ \hline
        Ex1.1~\cite{ISSAC2023Laura} & 3&2 & 3 & {0.01} & 0.06 & {0.04} & 0.06 & 4.4 & 0.2 & {0.01} & 0.06 &0.018 & 0.07 \\ \hline
        Ex1.1Ineq & 3&2 & 3 & {0.009} & 0.06 & {11.4} & 0.06 & \TLexp & \NIexp & {0.008} & 0.05 &0.03 & 0.05 \\ \hline
        Ex5.2\cite{hitarth_et_al:LIPIcs.STACS.2024.41} & 2&1 & 2 & {0.03} & 0.17 & {0.16} & \TLexp & \TLexp & \NIexp & {0.01} & 0.07 &0.13&0.07\\ \hline
      sum1~\cite{humenberger2022LoopSynthesis} & 3&2 & 2 & {0.02} & 0.06 & {0.13} & 0.06 & 1.1 & 0.06 & {0.01} & 0.05 &0.02 & 0.06 \\ \hline
        square~\cite{humenberger2022LoopSynthesis} & 2&1 & 2 & {0.01} & 0.06 & {0.5} & \TLexp & \TLexp & \NIexp & {0.01} & 0.06 &0.015 & 0.26 \\ \hline
        square\_conj~\cite{humenberger2022LoopSynthesis} & 3&2 & 2 & {0.019} & 0.06 & {0.14} & 0.09 & 0.39 & 0.06 & {0.007} & 0.05 &0.015 & 0.06 \\ \hline
        fmi1~\cite{humenberger2022LoopSynthesis} & 2&1 & 2 & {1.19} & 0.06 & {161.74} & 0.06 & \TLexp & \NIexp & {237.1} & 0.06 &\TLexp & \NIexp \\ \hline
       fmi2~\cite{humenberger2022LoopSynthesis} & 3&2 & 2 & {0.01} & 0.06 & {0.015} & 0.06 & 0.017 & 0.07 & {0.009} & 0.07 &0.01 & 0.07 \\ \hline
      fmi3~\cite{humenberger2022LoopSynthesis}& 3&2 & 2 & {0.01} & 0.06 & {0.03} & 0.05 & 0.39 & 0.09 & {0.01} & 0.06 &0.2 & 0.11 \\ \hline
       intcbrt~\cite{humenberger2022LoopSynthesis} & 3&2 & 2 & {0.25} & 0.07 & {16.6} & 0.08 & \TLexp & \NIexp & {0.01} & 0.06 &0.65 &0.17 \\ \hline
cube\_square~\cite{humenberger2022LoopSynthesis} & 3&1 & 3 & {0.01} & 0.07 & {0.018} &\TLexp& \TLexp & \NIexp & {0.15} & 0.06 &0.96 & 106 \\ \hline
Ex3 & 2&1 & 2 & {0.014} & 0.09 & {0.02} &0.1& 0.034 & 0.1 & {0.13} & 0.1 &0.07 & 0.12 \\ \hline
markov\_triples & 3&1 & 2 & {5.61} & 0.1 & 36.2 & 0.1& 134.1 & 0.6 & {1.35} & 0.1 &\TLexp & \NIexp \\ \hline
       
    \end{tabular}
    }
    
    \caption{\label{table1}Timings for Algorithm~\ref{algo:generateloops} in seconds; 
    }
\end{table}

Table~\ref{table2} presents the polynomial systems produced by Algorithm~\ref{algo:generateloops}, which define $\CL(\ab,h,\fb;\gb)$.  
Similarly, Table~\ref{table3} shows the systems produced by Algorithm~\ref{algo:universalgeneralcase}, which define $U\CL(h,\fb;\gb)$, together with the execution times for \texttt{Z3} on those systems.  
For each choice of ``$D$'' and ``$l$'', we report the number ``$s$'' of nonzero polynomials and whether the system has finitely many solutions.  
In Tables~\ref{table2} and~\ref{table3}, ``Id'' indicates that, within the given structure, the only loop satisfying the specified invariants is the identity map.

Macaulay2~\cite{M2} computes the irreducible decomposition of the generated varieties within a few seconds in most of the cases shown in Tables~\ref{table2} and~\ref{table3}.  
In many examples, the irreducible components are defined by linear equations.  
Furthermore, Table~\ref{table2} shows that even when the dimension is~0, the varieties are never empty; they always consist of finitely many points.

\newcommand{\Fin}{$<\!\!\infty$}
\newcommand{\NS}{$\#$sols\xspace}

\begin{table}[H]
    \centering\hspace*{-0.2cm}
    \scalebox{0.51}{
    \begin{tabular}{|*{4}{c|}*{5}{|c|c|}}
    \hline
        \multicolumn{4}{|c||}{Polynomial map} & \multicolumn{2}{|c||}{D=1, $l=3$} & \multicolumn{2}{|c||}{D=1, $l=4$} &  \multicolumn{2}{|c||}{D=1, $l=5$} & \multicolumn{2}{|c||}{D=2, $l=2$} & \multicolumn{2}{|c|}{D=2, $l=3$} \\ \hline
    Benchmark &$n$ & $m$& $d$ &$s$ & \NS & $s$ & \NS & $s$ & \NS & $s$ & \NS & $s$ & \NS\\ \hline      
        Ex1.2~\cite{ISSAC2023Laura} & 2&1 & 4 & 3 & $\infty$ & 4 & $\infty$ & \TLexp & \TLexp & 2 & $\infty$ &\TLexp&\TLexp \\ \hline
        Ex1.1~\cite{ISSAC2023Laura} & 3&2 & 3 & 2 & $\infty$ & 4 & $\infty$ & 6 & $\infty$ & 4 & \Fin &4& \Fin\\ \hline
        Ex1.1Ineq & 3&2 & 3 & 2 & $\infty$ & 4 &$\infty$ & \TLexp & \TLexp & 4 & \Fin &4& \Fin\\ \hline
        Ex5.2\cite{hitarth_et_al:LIPIcs.STACS.2024.41} & 2&1 & 2 & 3 & $\infty$ & 3 &$\infty$ & \TLexp & \TLexp & 3 & \Fin &3&$\infty$\\ \hline
          sum1~\cite{humenberger2022LoopSynthesis} & 3&2 & 2& 4 & \Idexp & 6 & $\infty$ & 6 & $\infty$ & 2 & \Fin &4&\Fin \\ \hline
        square~\cite{humenberger2022LoopSynthesis} & 2&1 & 2 & 2 & $\infty$ & 4 & $\infty$ & \TLexp & \TLexp & 2 & \Fin &3&$\infty$ \\ \hline
        square\_conj~\cite{humenberger2022LoopSynthesis} & 3&2 & 2 & 4 & \Fin & 6 & $\infty$ & 6 & $\infty$ & 4 & \Fin &4&\Fin \\ \hline
        fmi1~\cite{humenberger2022LoopSynthesis} & 2&1 & 2 & 0 & $\infty$ & 0& $\infty$ & \TLexp & \TLexp & 0 & $\infty$ &\TLexp&\TLexp \\ \hline
       fmi2~\cite{humenberger2022LoopSynthesis} & 3&2 & 2 & 2 & $\infty$ & 4 & $\infty$ & 4 & $\infty$ & 2 & \Fin &4& \Fin\\ \hline
      fmi3~\cite{humenberger2022LoopSynthesis} & 3&2 & 2 & 4 & \Idexp & 4 & $\infty$ & 6 & $\infty$ & 2 &\Fin &4&\Fin \\ \hline
       intcbrt~\cite{humenberger2022LoopSynthesis} & 3&2 & 2& 4& \Idexp & 4 & $\infty$ & \TLexp & \TLexp & 2 & \Fin &4&\Fin\\ \hline
cube\_square~\cite{humenberger2022LoopSynthesis} & 3&1 & 3 & 2 & $\infty$ & 3 & $\infty$ & \TLexp & \TLexp &3 &\Fin &5&\Fin \\ \hline
Ex3 & 2&1 & 2 & 3 & $\infty$ & 3 & $\infty$ & 7 & $\infty$ &3 &\Fin &7&\Fin \\ \hline
markov\_triples & 3&1 & 2 & 15 & \Fin & 15 & \Fin& 15 & \Fin & 7 & \Fin &\TLexp & \NIexp \\ \hline
       
    \end{tabular}
    }
    
    \caption{\label{table2} Data on outputs of Algorithm~\ref{algo:generateloops}  
    }
\end{table}


For every benchmark in Table~\ref{table1}, Algorithm~\ref{algo:universalgeneralcase} terminates within 0.1 seconds and \texttt{Z3} within 0.3 seconds.  
However, in most cases the polynomial equations produced by Algorithm~\ref{algo:universalgeneralcase} either admit no common solution or yield only the identity loop within the specified structure.  
This limitation arises from the fact that Algorithm~\ref{algo:universalgeneralcase} assumes the invariants to be universally inductive, whereas Algorithm~\ref{algo:generateloops} makes no such assumption.  

Accordingly, and because efficiency allows for it, in Table~\ref{table3} we use significantly larger templates for the update maps of loops than in Table~\ref{table1}.  
In particular, the number of generators in Table~\ref{table3} ranges from 6 to 121, compared to only 3 to 5 in Table~\ref{table1}.  
For these larger templates, Algorithm~\ref{algo:generateloops} fails to terminate within 300 seconds mainly because of costly Gr\"obner basis computations.
In contrast, Algorithm ~\ref{algo:universalgeneralcase} always terminates within 0.6 seconds because it relies solely on linear algebra computations. However, it generally produces a loop of a higher degree, which is more susceptible to numerical instability.

In each column of Table~\ref{table3}, the update maps are generated from all monomials up to degree~$D$.

\begin{table}[H]
    \centering\hspace*{-0.2cm}
    \scalebox{0.48}{
    \begin{tabular}{|*{4}{c|}*{3}{|c|c|c|c|}}
    \hline
        \multicolumn{4}{|c||}{Polynomial map} &  \multicolumn{4}{|c||}{D=1} &\multicolumn{4}{|c||}{D=2} &  \multicolumn{4}{|c|}{D=3} \\ \hline
    Benchmark &$n$ & $m$& $d$& $l$ & \texttt{Z3}&$s$ & \NS & $l$ & \texttt{Z3}&$s$ & \NS& $l$ & \texttt{Z3}&$s$ & \NS\\ \hline      
        Ex1.2~\cite{ISSAC2023Laura} & 2&1 & 4 & 6 &\TLexp & 15&\Fin   &12&\TLexp&45 & \Fin  & 20&\TLexp&91&\Fin \\ \hline
        Ex1.1~\cite{ISSAC2023Laura} & 3&2 & 3 & 12&0.1 &30 &\Idexp&30&0.1&119 & $\infty$ &60&0.15 &304& \TLexp\\ \hline
        Ex1.1Ineq & 3&2 & 3 & 12&0.1 &30 &\Idexp&30&0.1&119 & $\infty$ &60&0.15 &304& \TLexp\\ \hline
        Ex5.2\cite{hitarth_et_al:LIPIcs.STACS.2024.41} & 2&1 & 2 & 6&0.1 &6 &$\infty$&12&0.14&15 &$\infty$  &20& \TLexp&28&$\infty$\\ \hline
     sum1~\cite{humenberger2022LoopSynthesis} & 3&2 & 2&  12 &0.15 &14 &$\infty$&30&0.16&45& $\infty$ &60& 0.17 &104& \TLexp\\ \hline
        square~\cite{humenberger2022LoopSynthesis} & 2&1 & 2 & 6& 0.1 &6&$\infty$&12&0.1&15 &$\infty$ & 20& 0.1&28&$\infty$ \\ \hline
        square\_conj~\cite{humenberger2022LoopSynthesis} & 3&2 & 2 & 12 &0.1&14 &$\infty$ &30&0.16&45& $\infty$& 60& \TLexp & 104& \TLexp\\ \hline
        fmi1~\cite{humenberger2022LoopSynthesis} & 2&1 & 2 & 6& 0.14&6 &$\infty$&12&0.14& 15 & $\infty$ &20 & 0.15 &28&$\infty$ \\ \hline
       fmi2~\cite{humenberger2022LoopSynthesis} & 3&2 & 2 &12 &0.15  &14 & $\infty$& 30&0.15&45&$\infty$&60&0.17 &104& \TLexp\\ \hline
      fmi3~\cite{humenberger2022LoopSynthesis} & 3&2 & 2 &12  &0.2 &14 &$\infty$&30&0.15&45 & $\infty$& 60&0.17 &104& \TLexp\\ \hline
       intcbrt~\cite{humenberger2022LoopSynthesis} & 3&2 & 2&12  &0.17 &30 &\Idexp&30&0.2&119 & $\infty$ & 60&0.25 &304&\TLexp\\ \hline
cube\_square~\cite{humenberger2022LoopSynthesis} & 3&1 & 3  & 12 &0.2& 20&$\infty$ &30& 0.19& 84&\TLexp&60&0.5&220&\TLexp \\ \hline
Ex3 & 2&1 & 2 &  12& 0.21  &12 &$\infty$&24& 0.2& 30&$\infty$&40&0.21&56&$\infty$\\ \hline
markov\_triples & 3&1 & 2 & 24 &  \TLexp&40 &\Fin &60 & \TLexp&168 &\TLexp&121& \TLexp&304 & \TLexp \\ \hline  
    \end{tabular}
    }
    \caption{\label{table3} Data on outputs of Algorithm~\ref{algo:universalgeneralcase} }
\end{table}

\end{document}